\documentclass[conference]{IEEEtran}
\IEEEoverridecommandlockouts
\usepackage{cite}
\usepackage{amsmath,amssymb,amsfonts}
\usepackage{graphicx}
\usepackage{textcomp}
\usepackage[table]{xcolor}
\usepackage{soul}
\usepackage[most]{tcolorbox}
\usepackage{amsthm}
\usepackage{algorithm}
\usepackage{algpseudocode}
\newtheorem{theorem}{Theorem}
\usepackage{subcaption}
\usepackage{booktabs,array,multirow,adjustbox}
\definecolor{oursblue}{RGB}{218,245,248}

\def\BibTeX{{\rm B\kern-.05em{\sc i\kern-.025em b}\kern-.08em
    T\kern-.1667em\lower.7ex\hbox{E}\kern-.125emX}}
\begin{document}

\title{FuseRAG: Fusing Structure and Evidence into Inference for Cost-Efficient Graph-based RAG
}

\author{
\IEEEauthorblockN{
Tingqi Wang,
Yingshu Li,
Zhipeng Cai\textsuperscript{*}
}
\IEEEauthorblockA{
\textit{Department of Computer Science, Georgia State University} \\
Atlanta, GA, United States \\
\{twang33, yili, zcai\}@gsu.edu
}
}
\maketitle

\begin{abstract}
Graph-based retrieval-augmented generation (RAG) exploits relationships among evidence to support multi-hop question answering. However, conveying retrieved graphs through non-unique linear textual contexts to Large Language Models (LLMs) leads to graph-structure degradation and inefficient KV-cache reuse.
Meanwhile, graph-structure order optimization is already NP-hard, and the additional KV-cache reuse objective further complicates the search by introducing potentially conflicting orders.
To tackle these issues, we propose FuseRAG, a cost-efficient graph-based RAG framework that shifts most retrieved evidence from explicit textual contexts to latent representations. Our main idea is to separate reusable evidence content from query-aware structural interactions in latent space, while retaining only a few explicit facts as semantic anchors. The resulting latent representations are injected into decoder self-attention, where they reshape attention over evidence and thereby influence hidden states and next-token probabilities.
This design preserves graph structure through latent interactions, enables evidence reuse beyond exact token-prefix matching.
Experiments on three multi-hop question-answering datasets with Llama-3-8B and Qwen3.5-9B show that FuseRAG improves answer quality while achieving up to 2.7$\times$ and 2.2$\times$ TTFT speedups over the fastest graph-based baseline, respectively.
\end{abstract}

\begin{IEEEkeywords}
Graph-based RAG, Large Language Model
\end{IEEEkeywords}

\section{Introduction}
Large Language Models (LLMs) \cite{chang2024survey, wu2024evolutionary} have demonstrated significant capabilities in language understanding, greatly advancing the development of Artificial General Intelligence (AGI).
Despite their considerable success, LLMs still generate hallucinations and wrong answers due to the lack of domain-specific knowledge in pretraining data \cite{lewis2020retrieval, li2024dawn, asai2024self, li2025neutronrag}.
Retrieval-Augmented Generation (RAG) \cite{fan2024survey} methods have been widely adopted to mitigate this limitation by integrating external knowledge into LLMs \cite{luo2026gfm}.
Among these approaches, graph-based RAG \cite{zhu2026graph, luo2024reasoning, indepth, tong2026galaxyrag, li2025framework} has attracted substantial attention.
It constructs a knowledge graph from source documents and retrieves relevant entities and relations as evidence to augment the LLM context.
Graph captures complex relationships among entities, facilitating multi-hop reasoning along relational paths.

\begin{figure}[t]
    \centering
    \includegraphics[width=1\columnwidth]{fig/illustration.png}
    \caption{\textbf{The same retrieved graph with different evidence utilization.} Serialization orders aligned with the graph structure help LLMs utilize evidence effectively but reduce efficiency. However, orders optimized for KV-cache reuse weaken evidence utilization.}
    \label{fig:illustration}
\end{figure}

Although benefiting from graph-structure knowledge, 
existing graph-based RAG still suffers from limited expressiveness.
Specifically, before being fed into LLMs,
the retrieved graph-structured evidence must be serialized into linear textual contexts.
This mechanism poses a major challenge to graph-based RAG:
\textbf{Graph Structure Degradation.} 
After serialization, the relative positions of evidence implicitly describe the graph structure and influence evidence utilization \cite{liu2024lost}. 
However, a graph generally can be represented by multiple serialization orders, utilizing  evidence with different pattern.
As illustrated in Figure \ref{fig:illustration},
a query asking for the author of a novel adapted into a film directed by Peter Jackson. When the retrieved evidence is serialized along the reasoning path related to the query, adjacent sentences share entities and form a continuous chain (from Peter Jackson to the film, the novel, and finally J. R. R. Tolkien).
In contrast, factual entities separated by irrelevant content break the subgraph relationships and require LLMs to reason answers from multiple confusing information across distant contextual positions.
Consequently, reasoning paths become harder to trace, even with the same retrieved results.

Furthermore, the variability serialization of evidence \textbf{hinders the KV-cache reuse efficiency}. 
RAG typically introduces a large amount of external knowledge,
increasing the prefill cost.
As a common solution, KV-cache reuse accelerates LLM inference by caching the key-value states of previously processed tokens and reusing them.
However, these approaches are inherently designed for sequential text and assume identical token prefixes \cite{lu2022fantastically,zheng2024sglang}.
In graph-based RAG, inconsistent serialization orders undermine KV-cache reuse across equivalent evidence contexts and increasing inference cost.

The above issues motivate a method that effectively exploits graph structure while maintaining inference efficiency.
An intuitive idea is to optimize the serialization order for both graph-structure representation and KV-cache reuse.
Nevertheless, as shown in Section \ref{sec:p & m}, structural ordering is already NP-hard, and the additional cache-reuse objective further complicates the optimization.
The search cost grows rapidly with the graph size, making optimal serialization impractical for large-scale graph-based RAG.
Therefore, we explore an alternative to textual serialization for conveying graph evidence to LLMs, aiming to alleviate these limitations.

In LLMs, queries and retrieved evidence ultimately interact through attention over their internal representations. 
Inspired by this mechanism, we embed query-evidence interactions into latent space and integrate it with the LLM’s internal representations, avoiding information missing in textual serialization.
To this end, we propose FuseRAG (\textbf{FU}sing \textbf{S}tructure and \textbf{E}vidence into transformer inference), a novel framework for efficient graph-based RAG,
which shift the primary evidence interface from textual serialization to latent representations.
FuseRAG consists of three stages:
In the stage 1, FuseRAG retrieves query-aware evidence from the knowledge graph with greedy heuristic search, 
preserving the retrieved subgraph structure for subsequent fusion.
In the next stage, dual-Path Evidence Fusion embeds the retrieved evidence into latent space and constructs compact explicit evidence.
Cross-attention between the query and evidence tokens produces query-specific node representations , which are further fused by graph propagation, 
obtaining an auxiliary state for each node. 
These auxiliary states are appended to the original evidence states as structural supplements.
In parallel, from the retrieved subgraph, a lightweight scoring mechanism selects a small set of facts based on their relevance to the query.
To maximize the total relevance score under a strict token budget of 32, 64, or 128 tokens,
the 0-1 Knapsack Algorithm is adopted for selection.
In Stage 3, FuseRAG projects the node representations into memory KV, which are integrated with text KV in the decoder’s self-attention. 
Node representations remain reusable across queries, while memory KV is reused throughout each answer generation.
By integrating reusable evidence representations and query-aware graph interactions into LLM attention,
FuseRAG moves beyond purely textual context construction.
This design enables faster inference and more accurate answer generation in graph-based RAG.

Our contributions are summarized as follows:
\begin{itemize}
    \item We reveal two key limitations of evidence serialization in graph-based RAG: 
    graph structure degradation and inefficient KV-cache reuse.
    Jointly optimizing serialization order to address both limitations is NP-hard, as the optimization subsumes an NP-hard ordering problem as a special case.
Moreover, the two objectives introduce additional trade-offs.
    \item We propose FuseRAG, a cost-efficient graph-based RAG framework that integrates graph evidence and structural interactions into LLM inference through latent representations, reducing dependence on textual serialization.
    By retaining only a small amount of explicit evidence as semantic anchors, FuseRAG significantly improves inference efficiency.
    
    \item We conduct comprehensive experiments on Llama-3-8B and Qwen3.5-9B across three multi-hop QA datasets, demonstrating that FuseRAG consistently improves inference efficiency while enhancing QA performance, with up to 2.7$\times$ and 2.2$\times$ inference speedups over the fastest graph-based baseline, respectively.
    
\end{itemize}
\section{Preliminaries and Motivation}
\label{sec:p & m}

\subsection{Graph-based RAG}
\begin{table}[t]
\centering
\caption{Summary of important notations.}
\label{tab:notations}
\resizebox{\columnwidth}{!}{
\begin{tabular}{ll}
\toprule
\textbf{Notation} & \textbf{Description} \\
\midrule
$\mathcal{G}=(\mathcal{V},\mathcal{E})$
    & Knowledge graph \\
$q$
    & Input query \\
$\mathcal{G}_q=(\mathcal{V}_q,\mathcal{E}_q)$
    & Retrieved subgraph for query $q$ \\
$\mathcal{B}_q$
    & Retrieved textual evidence set \\
$\pi$
    & Serialization order \\
$X_\pi$
    & Serialized evidence context \\
$C_{\mathrm{str}}(\pi)$
    & Graph structural cost \\
$R_{\mathrm{cache}}(\pi)$
    & KV-cache reuse benefit \\
\midrule
$\mathcal{G}_t$
    & Selected subgraph at retrieval step $t$ \\
$\mathcal{A}_t$
    & Candidate retrieval actions \\
$s_t(a)$
    & Heuristic score of action $a$ \\
$K$
    & Retrieval budget \\
\midrule
$H_i$
    & Token-level representation of evidence $b_i$ \\
$z_q$
    & Query representation \\
$z_i$
    & Query-aware evidence representation \\
$f_i$
    & Graph-aware structural representation \\
$M_q$
    & Latent evidence memory \\
$S_q$
    & Explicit semantic anchors \\
$\tau$
    & Explicit evidence token budget \\
\midrule
$\mathcal{L}_{\mathrm{mem}}$
    & Memory injection layers \\
$K_l^{\mathrm{mem}},V_l^{\mathrm{mem}}$
    & Memory keys and values at layer $l$ \\
$K_l^{\mathrm{text}},V_l^{\mathrm{text}}$
    & Textual keys and values at layer $l$ \\
$r$
    & Memory projection rank \\
$d_e$
    & Evidence encoder hidden dimension \\
$d_{\mathrm{KV},l}$
    & Decoder KV projection dimension \\
$L_M$
    & Number of latent memory positions \\
$\mathcal{L}_{\mathrm{QA}}$
    & Answer generation loss \\
\bottomrule
\end{tabular}
}
\end{table}
Let $\mathcal{G}=(\mathcal{V}, \mathcal{E})$ denote a knowledge graph constructed from text corpus $\mathcal{T}$, where $\mathcal{V}$ and $\mathcal{E}$ are the sets of nodes and edges.
Each node $v\in\mathcal{V}$ has a corresponding textual evidence unit $x_v$.
Given a query $q$, a graph-based RAG system retrieves a subgraph $\mathcal{G}_q=(\mathcal{V}_q,\mathcal{E}_q)$ from $\mathcal{G}$, where $\mathcal{V}_q\subseteq\mathcal{V}$ and $\mathcal{E}_q\subseteq\mathcal{E}$.
$\mathcal{G}_q$ with supplementary information enhances LLM generation.
Since LLMs take sequential textual inputs, conventional graph-based RAG methods serialize $\mathcal{G}_q$ according to an order $\pi=(\pi_1,\pi_2,\ldots,\pi_{|\mathcal{V}_q|})$.
Therefore, the textual context is written as $X_{\pi}=\operatorname{Serialize}(\mathcal{G}_q,\pi)$.
Then, LLM generates an answer $y$ with system instructions, query $q$ and evidence context $X_{\pi}$.

Naturally, a retrieved subgraph $\mathcal{G}_q$ has multiple serialization orders.
LLMs exhibit positional biases in long contexts, utilizing information near the beginning or end more effectively \cite{liu2024lost}. 
Moreover, the connected evidence in  $\mathcal{G}_q$ becomes more difficult to integrate when scattered across distant positions in the serialized context.
Consequently, although these orders preserve the same underlying graph evidence, 
different textual sequences and relevant positions lead to substantial differences in LLM inference.

\subsection{KV-Cache Reuse in LLM Inference}
Mordern decode-only LLMs employ multiple Transformer layers.
Each layer mainly consists of a self-attention module and a feed-forward network (FFN).
At $l$-th layer, hidden representations $H^{(l)}$ are projected into queries $Q^{(l)}$, keys $K^{(l)}$, and values $V^{(l)}$ with trainable weights.
Self-attention aggregates value representations according to query-key relevance, capturing contextual information among tokens.
The resulting representations are further transformed by the FFN before being passed to the next layer.

LLM inference stage contains two phases: \emph{prefill} and \emph{decoding} \cite{vaswani2017attention}.
During prefill, LLMs process the entire input context in parallel and compute key-value states for all Transformer layers, which are stored in the KV cache for subsequent decoding.
The prefill cost increases with the context length, as longer inputs require more attention computation. 
This overhead is particularly pronounced in RAG, where retrieved evidence constitutes the majority of the input context \cite{zhang2025pqcache}. 
In the decoding phase, LLMs generate tokens autoregressively.
At each step, LLM reuses the cached key-value states of previous tokens and computes only new generated token.
Beyond within request reuse, if the current input shares an identical token prefix with a previously request, the corresponding cached key-value states can be directly reused to avoid repeated prefill computation.
This mechanism significantly reduce redundant computation in LLM inference. However, KV-cache reuse requires exact prefix matching. 
In graph-based RAG, same evidence admits multiple serialization order, resulting in different token prefixes and fewer opportunities for cache reuse \cite{lin2026kvdrive}.

\subsection{Challenges in Serialization-Order Optimization}
For a serialization order $\pi$, $\operatorname{pos}_{\pi}(v)$ is the position of node $v$ in the serialized sequence.
Keeping connected evidence close in text facilitates LLMs to capture their relationships.
For effectiveness estimation, the structural cost is defined as 
$C_{\mathrm{str}}(\pi)
=
\sum_{(u,v)\in\mathcal{E}_q}
w_{uv}
\left|
\operatorname{pos}_{\pi}(u)
-
\operatorname{pos}_{\pi}(v)
\right|,    $
where $w_{uv}$ denotes the importance of relation $(u,v)$.
A smaller $C_{\mathrm{str}}(\pi)$ indicates that connected evidence is placed closer together, better preserving relational locality \cite{wei2016speedup}.
The optimization of $C_{\mathrm{str}}(\pi)$ is NP-hard by reduction from the Minimum Linear Arrangement (MinLA) problem.
\begin{tcolorbox}[
    colback=white,
    colframe=black!50,
    boxrule=0.5pt,
    arc=1mm,
    left=5pt,
    right=5pt,
    top=4pt,
    bottom=4pt
]
\begin{theorem}
Finding a serialization order that minimizes
$C_{\mathrm{str}}(\pi)$ is NP-hard.
\end{theorem}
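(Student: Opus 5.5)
The plan is a polynomial-time reduction from the Minimum Linear Arrangement (MinLA) problem, whose decision version is NP-complete by Garey, Johnson and Stockmeyer. A MinLA instance is an undirected graph $G=(V,E)$ and an integer $k$. It asks whether some bijection $f:V\to\{1,\ldots,|V|\}$ satisfies $\sum_{(u,v)\in E}|f(u)-f(v)|\le k$. I will show that MinLA is exactly the special case of serialization-order optimization with unit edge weights. Hence any algorithm that minimizes $C_{\mathrm{str}}(\pi)$ would solve MinLA.

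\textbf{Construction.} Given $G=(V,E)$, I build a retrieved subgraph $\mathcal{G}_q=(\mathcal{V}_q,\mathcal{E}_q)$ with $\mathcal{V}_q=V$ and $\mathcal{E}_q=E$. Each node is assigned an arbitrary distinct textual evidence unit $x_v$, for instance a placeholder string, and every weight is set to $w_{uv}=1$. This takes time linear in $|V|+|E|$.

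\textbf{Equivalence.} A serialization order $\pi=(\pi_1,\ldots,\pi_{|\mathcal{V}_q|})$ is a permutation of $\mathcal{V}_q$. The map $f(v)=\operatorname{pos}_\pi(v)$ is therefore a bijection onto $\{1,\ldots,|V|\}$, and every such bijection arises from exactly one order. Under this correspondence, $C_{\mathrm{str}}(\pi)=\sum_{(u,v)\in E}|f(u)-f(v)|$, which is precisely the MinLA objective. Consequently, an optimal serialization order gives a minimum linear arrangement. Comparing its cost with $k$ answers the MinLA decision question, so the optimization problem is NP-hard.

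\textbf{Main obstacle.} The mathematics is straightforward; the difficulty is definitional. One must check that $\operatorname{pos}_\pi$ counts node positions, i.e.\ the index in $\pi$, rather than token offsets in $X_\pi$. If token offsets were intended, variable-length evidence units could distort distances. I would handle this by fixing the definition as the index in $\pi$, consistent with the paper's formulation of $C_{\mathrm{str}}$.

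A fallback covers the token-offset reading. I would give every $x_v$ the same token length $c$, so that token distances equal $c$ times index distances and the objective is only rescaled by $c$.

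Finally, the reduction must remain valid even if admissible graphs $\mathcal{G}_q$ are required to be connected. MinLA stays NP-hard on connected graphs. Alternatively, a disconnected instance can be replaced by a connected one whose optimum is related to the original optimum by a computable offset, which I would justify by citing the known result for connected graphs.
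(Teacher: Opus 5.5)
Your proposal is correct and follows the paper's own argument: the paper likewise sets $w_{uv}=1$, observes that minimizing $C_{\mathrm{str}}(\pi)$ then becomes exactly MinLA, and concludes NP-hardness. Your explicit instance construction, the order-to-arrangement bijection, and your remarks on reading $\operatorname{pos}_\pi$ as an index rather than a token offset, and on connectedness, fill in details the paper leaves implicit without changing the route.
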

\begin{proof}
The MinLA problem is NP-hard.
By setting $w_{uv}=1$ for all $(u,v)\in\mathcal{E}_q$,
minimizing $C_{\mathrm{str}}(\pi)$ becomes exactly MinLA.
Therefore, minimizing $C_{\mathrm{str}}(\pi)$ is NP-hard.
\end{proof}
\end{tcolorbox}
Meanwhile, KV-cache reuse benefits from serialization orders that maximize shared token prefixes across requests.
We use $\mathcal{C}$ for the set of cached input sequences and
$\operatorname{LCP}(X_\pi,X_c)$ for the longest common prefix length between the serialized context $X_\pi$ and a cached sequence $X_c\in\mathcal{C}$.
Therefore, cache-reuse benefit is $R_{\mathrm{cache}}(\pi)
=
\max_{X_c\in\mathcal{C}}
\operatorname{LCP}(X_\pi,X_c)
\label{eq:lcp},$ measuring KV-cache reuse efficiency. 

To improve both the effectiveness and efficiency, the serialization order needs to balance graph-structure preservation and KV-cache reuse. An intuitive solution is to jointly optimize these two objectives:
\begin{equation}
\pi^{*}
=
\arg\min_{\pi}
\left[
\lambda C_{\mathrm{str}}(\pi)
-
(1-\lambda)R_{\mathrm{cache}}(\pi)
\right],
\label{eq:cost}
\end{equation}
where $\lambda\in[0,1]$ controls the trade-off between structural preservation and cache reuse.
Equivalently, the goal is to identify a Pareto-efficient order that improves one objective without unnecessarily sacrificing the other.
Unfortunately, this joint optimization is computationally intractable.
Optimizing structural locality is already NP-hard. The difficulty is further compounded by KV-cache reuse, which may conflict with structural locality. Exact joint optimization requires navigating a search space under competing objectives, making it impractical for large-scale graph-based RAG.
Therefore, we move beyond the limitations of textual serialization and encode graph evidence into reusable latent representations.
The important notiations used in this paper is shown in Table \ref{tab:notations}.

\begin{figure*}[t]
    \centering
    \includegraphics[width=1\linewidth]{fig/overview.png}
    \caption{\textbf{Overview of FuseRAG.} The framework consists of three stages: evidence retrieval, dual-path evidence representation, and LLM integration. 
    Reusable latent evidence and graph interactions are integrated into decoder self-attention through additional memory key-value states, while compact explicit facts provide semantic anchors.}
    \label{fig:overview}
\end{figure*}

\section{Methodology}
In this section, we propose FuseRAG, which shifts the primary representation of graph evidence from textual serialization to latent representations, preserving graph structure and enabling efficient KV-cache reuse simultaneously. 
Next, we describe its design in detail.
\subsection{Overview}
Following the execution flow, FuseRAG consists of three stages: Evidence Retrieval, Dual-Path Evidence Representation, and LLM Integration:
\begin{itemize}
    \item \textbf{Stage 1: Evidence Retrieval.}
        Given a query $q$ and a knowledge graph $\mathcal{G}$, FuseRAG retrieves
    a set of relevant evidence $\mathcal{B}_q$ together with the corresponding
    subgraph $\mathcal{G}_q$.
    FuseRAG is compatible with general graph retrieval methods. 
    In our implementation, we adopt a greedy heuristic search strategy to construct \((\mathcal{B}_q,\mathcal{G}_q)\).
    Specifically, a lightweight scorer sequentially selects evidence according to the query and selected evidence.
    \item \textbf{Stage 2: Dual-Path Evidence Representation.}
    FuseRAG represents the retrieved evidence through a latent path and an explicit path.
    The latent path independently encodes each retrieved evidence unit into reusable token-level representations.
    These representations are then pass messages by graph encoder over the retrieved subgraph to capture query-relevant structural dependencies.
    The resulting latent memory $M_q$ therefore combines detailed evidence semantics with query-aware graph information without requiring the full retrieved graph to be serialized as text.
    In parallel, the explicit path extracts a compact set of query-aware facts $S_q$ under a strict token budget, providing concise semantic anchors to the LLM.
    Since only a small amount of evidence is serialized into text, the positional effects associated with long contexts are less pronounced. 

    \item \textbf{Stage 3: LLM Integration.}
    FuseRAG incorporates the two evidence representations into LLM inference.
    The latent memory $M_q$ is projected into memory key-value states with an adapter and then combined with the prompt key-value states in self-attention.
    The compact evidence $S_q$ is concatenated with the query to form the prompt.
    During decoding, the LLM attends to both the latent memory and the prompt context to compute the probability distribution over the next token.

\end{itemize}
In addition, we describe the training and inference procedures, including how the components are optimized and reused.
\subsection{Evidence Retrieval}
Although FuseRAG is compatible with common graph retrieval methods, 
we implement a lightweight selector for evidence retrieval.
Our selector constructs $\mathcal{G}_q$ by progressively expanding the currently selected subgraph, as shown in Algorithm \ref{algo:1}.
Starting from an intial node,
the selector maintains a partial subgraph $\mathcal{G}_t=(\mathcal{V}_t,\mathcal{E}_t)$ at step $t$.
Then, each candidate action $a\in\mathcal{A}_t=
\left\{
(u,r,v)\in\mathcal{E}
\,\middle|\,
u\in\mathcal{V}_t,\;
v\notin\mathcal{V}_t
\right\}
\cup \{\mathrm{STOP}\}$ corresponds to adding an unselected neighbor in knowledge graph $\mathcal{G}$ or stopping,
where $(u,r,v)$ is an edge from $u$ to $v$ with relation $r$.
Specifically, we define the score for $a$ as
\begin{equation}
s_t(a)=
\begin{cases}
\alpha R_q(v)+\beta R_q(r)+\gamma N_t(v),
& a=(u,r,v),\\[2mm]
S_{\mathrm{stop}}(q,\mathcal{G}_t),
& a=\mathrm{STOP}.
\end{cases}
\label{eq:retrieval_score}
\end{equation}
$R_q(v)$ and $R_q(r)$ measure the relevance of the candidate evidence and relation to the query computed by cosine similarity.
$N_t(v)=1-\max_{v'\in\mathcal{V}_t}
\cos(\mathbf{e}_v,\mathbf{e}_{v'})$ is the complementary information introduced by $v$ to the currently selected subgraph.
$S_{\mathrm{stop}}(q,\mathcal{G}_t)=\delta-\max_{v\in\mathcal{C}_t}S_{\mathrm{select}}(v\mid q,\mathcal{G}_t)$ estimates the maximum utility of the remaining candidates.
Retrieval stops when $S_{\mathrm{stop}}(q,\mathcal{G}_t)$ is nonnegative.

At each step, the selector greedily chooses the highest-scoring action
$a_t
=
\arg\max_{a\in\mathcal{A}_t}
s_t(a).$
For $a_t=(u,r,v)$, the selected subgraph is expanded as
$\mathcal{V}_{t+1}
=
\mathcal{V}_t\cup\{v\},
\mathcal{E}_{t+1}
=
\mathcal{E}_t\cup\{(u,r,v)\}.$
The process continues until $\mathrm{STOP}$ is selected or a predefined
retrieval budget is reached.
Then, we set
$\mathcal{V}_q=\mathcal{V}_t,
\mathcal{E}_q=
\left\{
(u,r,v)\in\mathcal{E}
\mid
u,v\in\mathcal{V}_q
\right\}.$
Both $\mathcal{B}_q$ and $\mathcal{G}_q=(\mathcal{V}_q,\mathcal{E}_q)$ are passed to the subsequent
dual-path evidence representation stage.

\subsection{Dual-Path Evidence Representation}
\subsubsection{Latent Evidence Encoding}
\label{sec:latent_encoding}

The latent path preserves fine-grained evidence semantics and
query-aware structural information without serializing the retrieved
subgraph into a long text sequence \cite{jiang2023llmlingua, liu2026autoencoding, ke2025mambev}.
For each retrieved evidence unit $b_i\in\mathcal{B}_q$, we independently
encode its text into token-level representations:
$H_i
=
E_{\phi}(b_i)
=
[h_{i,1},\ldots,h_{i,T_i}]
\in\mathbb{R}^{T_i\times d_e}$,
where $T_i$ is the number of valid tokens and $d_e$ is the hidden
dimension.
Token-level states preserve fine-grained semantic information for the decoder to access during generation.
Since $H_i$ is computed independently of the query and other evidence,
it can be cached and reused across queries.

To identify information relevant to the current query, 
cross-attention bewteen query representation $z_q$ from $q$ and $H_i$ is conducted,
$a_{i,t}=
\operatorname{Cross-Attn}_{t}
\left(
{h_{i,t}, z_q}
\right)$.
Then,
$z_i
=
\sum_{t=1}^{T_i} a_{i,t} W_v h_{i,t}$ summarizes the queay-aware evidence
$b_i$ for $q$.
Next, message passing by graph encoder over the retrieved
subgraph $\mathcal{G}_q$ aggregates information from node's neighbors: $f_i
=
\operatorname{GraphEncoder}
\left(\mathcal{G}_q,
z_i,
\{z_{j\in\mathcal{N}(i)}\}
\right)$.
The graph structure is absorbed by the auxiliary $f_i$,
capturing cross-evidence interactions.
It should be noted that, graph encoder can be instantiated with various message-passing architectures, such as GCN \cite{kipf2016semi}, GraphSAGE \cite{hamilton2017inductive}, and GAT \cite{velivckovic2017graph}.
We adopt GAT in our implementation.
Finally, the latent memory combines the token-level evidence
representations $H_i$ with the graph-aware auxiliary
representations $f_i$: 
$M_q
=
[H_1;f_1;\ldots;H_K;f_K].$

\subsubsection{Explicit Fact Anchoring}
\label{sec:explicit_anchor}
Although latent representations capture rich evidence semantics and graph structure, they lack direct textual grounding for the LLM.
FuseRAG therefore extracts a small set of query-aware facts as semantic anchors, explicitly presenting key entities and relations to guide answer generation.

For each retrieved passage $b_i\in\mathcal{B}_q$, we split it into candidate clauses $\mathcal{F}_q$.
Each clause $c\in\mathcal{F}_q$ is assigned a score that considers both query relevance and cross-evidence connections:
$r(c,q)
=
\left|
\mathcal{W}(c)\cap\mathcal{W}(q)
\right|
+
\lambda_{\mathrm{link}}\operatorname{Links}(c),$
where $\mathcal{W}(\cdot)$ is the set of normalized content words,
and
$\operatorname{Links}(c)
=
\sum_{j\neq i}
\mathbf{1}
\left[
\mathcal{W}(c)
\cap
\mathcal{W}(\operatorname{title}(b_j))
\neq\emptyset
\right]$
counts how many other retrieved passages are lexically connected to
$c$ through their titles.
We remove duplicates, and select complete clauses by solving a 0-1 knapsack problem under a strict token budget 
$\left|
\operatorname{Tok}_D(S_q)
\right|
\leq \tau$,
where $S_q$ is the resulting anchor text,
and $\operatorname{Tok}_D(S_q)$ denotes its token sequence
produced by the target decoder's tokenizer.

The resulting $S_q$ contains only a small amount of explicit evidence,
so positional effects associated with long serialized contexts are less
pronounced.
Evidence not included in $S_q$ remains available through the latent
path.
\begin{algorithm}[t]
\caption{Greedy Subgraph Retrieval}
\label{alg:retrieval}
\begin{algorithmic}[1]
\Require Query $q$, graph $\mathcal{G}=(\mathcal{V},\mathcal{E})$,
initial node $v_0$, retrieval budget $K$
\Ensure Retrieved subgraph $\mathcal{G}_q$ and evidence set $\mathcal{B}_q$

\State $\mathcal{V}_0\gets\{v_0\},\;
       \mathcal{E}_0\gets\emptyset,\;
       t\gets0$

\While{$t<K$}
    \State $\mathcal{A}_t\gets
    \{(u,r,v)\in\mathcal{E}
    \mid u\in\mathcal{V}_t,\,
    v\notin\mathcal{V}_t\}
    \cup\{\mathrm{STOP}\}$
    \State $a_t\gets
    \arg\max_{a\in\mathcal{A}_t}s_t(a)$
    
    \If{$a_t=\mathrm{STOP}$}
        \State \textbf{break}
    \EndIf
    
    \State $(u,r,v)\gets a_t$
    \State $\mathcal{V}_{t+1}\gets\mathcal{V}_t\cup\{v\},\quad
           \mathcal{E}_{t+1}\gets\mathcal{E}_t\cup\{(u,r,v)\}$
    \State $t\gets t+1$
\EndWhile

\State $\mathcal{V}_q\gets\mathcal{V}_t$, $\mathcal{E}_q\gets
\{(u,r,v)\in\mathcal{E}\mid u,v\in\mathcal{V}_q\}$
\State $\mathcal{G}_q\gets(\mathcal{V}_q,\mathcal{E}_q),\quad
       \mathcal{B}_q\gets\{\operatorname{Text}(v)\mid v\in\mathcal{V}_q\}$
\State \Return $\mathcal{G}_q,\mathcal{B}_q$
\end{algorithmic}
\label{algo:1}
\end{algorithm}
\subsection{LLM Integration}
\label{sec:llm_integration}

FuseRAG integrates latent and explicit evidence into LLM inference through complementary mechanisms. The latent memory \(M_q\) is projected into additional key-value states at selected decoder layers, enabling the LLM to access fine-grained evidence semantics and structural information through self-attention. Meanwhile, the explicit facts \(S_q\) are incorporated into the textual prompt and processed through the standard decoder path, providing direct semantic grounding for answer generation.

\subsubsection{Memory Projection}
\label{sec:memory_projection}

The latent memory $M_q\in\mathbb{R}^{L_M\times d_e}$
resides in the evidence encoder's representation space
and is not directly compatible with decoder attention.
We therefore introduce a lightweight adapter that maps
the latent representations into the decoder's key-value space \cite{houlsby2019parameter, hu2021lora}.
For each selected layer $l\in\mathcal{L}_{\mathrm{mem}}$,
the adapter employs separate low-rank projections for
memory keys and values:
$K_l^{\mathrm{mem}}
= (\operatorname{LN}(M_q)A_l^K)B_l^K,
V_l^{\mathrm{mem}}
= (\operatorname{LN}(M_q)A_l^V)B_l^V$,
where $A_l^{K/V}\in\mathbb{R}^{d_e\times r}$ and
$B_l^{K/V}\in\mathbb{R}^{r\times d_{\mathrm{KV},l}}$
are trainable projection matrices, with $r$ denoting
the projection rank.
Each decoder layer uses independent adapter parameters
to align latent evidence with its attention space.

The key and value projections use independent parameters, enabling each
decoder layer to learn how latent evidence should be matched and read in
its own attention space.
Importantly, the projection changes only the feature space while
preserving the token-level memory positions.
For a given query, the projected memory key-value states are computed once and
reused throughout autoregressive decoding.

\subsubsection{Joint Attention}
\label{sec:joint_attention}

The textual input is constructed as
$X_q=\operatorname{Template}(q,S_q)$,
and processed using the decoder's standard attention mechanism.
At an injection layer $l$, the textual and memory states are concatenated
along the key--value sequence dimension:
$\widetilde K_l
=
[K_l^{\mathrm{text}};K_l^{\mathrm{mem}}],
\widetilde V_l
=
[V_l^{\mathrm{text}};V_l^{\mathrm{mem}}].$
The attention output is then computed as
\begin{equation}
O_l
=
\operatorname{softmax}
\left(
\frac{Q_l\widetilde K_l^{\top}}
{\sqrt{d_{\mathrm{head},l}}}
+
\mathcal{A}_l
+
\mathcal{B}_l
\right)
\widetilde V_l,
\label{eq:joint_attention}
\end{equation}
where $\mathcal{A}_l$ applies the causal and memory-validity masks, and
$\mathcal{B}_l$ controls the relative preference for memory positions \cite{ke2025tinybev}.

Textual and memory states share the same softmax normalization.
Therefore, the memory keys directly influence the attention weights
assigned to their corresponding values.
The weighted memory values are then aggregated into the decoder hidden
states, which further affect the output logits and ultimately the
probability distribution over generated tokens.
As decoding proceeds, the decoder queries evolve while the latent memory
remains fixed, allowing different generation steps to dynamically attend
to different evidence positions and structural information.
\subsection{Training and Inference}
\label{sec:training_inference}

\paragraph{Training}
FuseRAG is trained to make the decoder effectively utilize both explicit
and latent evidence.
The evidence retrieval and explicit anchoring procedures are deterministic
and require no additional training.
We optimize the trainable components in the latent path, including the
adapted evidence encoder, query-conditioned graph fusion, memory
projectors, and decoder adaptation parameters.
Given a target answer
$y=(y_1,\ldots,y_T)$, we minimize the standard autoregressive
language-modeling loss:
\begin{equation}
\mathcal{L}_{\mathrm{QA}}
=
-\sum_{t=1}^{T}
\log
p_{\theta}
\left(
y_t
\mid
y_{<t},q,S_q,M_q
\right).
\label{eq:training_loss}
\end{equation}
This objective jointly trains the latent representations and their
integration into decoder attention to provide evidence useful for answer
generation.

\paragraph{Inference}
At inference time, FuseRAG first retrieves
$(\mathcal{B}_q,\mathcal{G}_q)$ and extracts the explicit anchors $S_q$.
For each retrieved evidence unit $b_i$, its cached token-level states
$H_i$ are reused when available; otherwise, $H_i$ is computed once and
inserted into the evidence cache.
FuseRAG then performs query-conditioned graph fusion to construct
$M_q$, projects it into memory key-value states, and generates the answer
autoregressively together with the textual input $X_q$.

FuseRAG therefore supports two levels of reuse.
Across queries, the query-independent evidence states $H_i$ can be
shared whenever the same evidence is retrieved.
Within a generation, the query-specific memory key-value states are computed
once and reused across decoding steps.
As decoding proceeds, only the decoder queries and textual KV cache
evolve, while the latent evidence memory remains fixed.
\section{Complexity Analysis}
\label{sec:complexity}

Let $k=|\mathcal{V}_q|$ and $m=|\mathcal{E}_q|$
denote the numbers of retrieved evidence units and edges.
The evidence contains $N=\sum_{i=1}^{k}T_i$ tokens,
and the latent memory contains $L_M=N+k$ positions.
Let $n$ and $a$ be the textual prompt length and
the number of generated tokens.
Only the explicit anchor portion of $n$ is bounded
by $\tau$.
$L_e$, $L_g$, and $L$ is the depths of the
evidence encoder, graph encoder, and decoder.
Let $d_e$ and $d$ be the evidence and decoder
representation dimensions, respectively.
The memory interface uses $J$ injection layers,
projection rank $r$, as well as key and value (KV) dimension $d_{\mathrm{KV}}$.
We assume dense Transformer computation, a fixed number of GAT heads, and a common decoder KV dimension.

\paragraph{\textbf{Stage 1: Evidence Retrieval.}}
Let $c_t$ be the number of candidate expansion
actions at step $t$, and $d_s$ be the retrieval
embedding dimension.
With precomputed node and relation embeddings,
query relevance costs $O(c_t d_s)$, and novelty
evaluation against the selected nodes costs
$O(c_t|\mathcal{V}_t|d_s)$.
Therefore,
$C_1
=
C_{\mathrm{init}}+C_{\mathrm{adj}}
+
O\left(
d_s\sum_t c_t(1+|\mathcal{V}_t|)
\right),$
where $C_{\mathrm{init}}$ is the initial-node
retrieval cost and $C_{\mathrm{adj}}$ covers
candidate enumeration and induced-subgraph construction.
With at most $K$ expansion steps and
$c_t\leq c_{\max}$, direct candidate scoring
has an upper bound of $O(c_{\max}K^2d_s)$.

\paragraph{\textbf{Stage 2: Dual-Path Evidence Representation.}}
For latent encoding, $\mathcal{U}_q$ contains
the retrieved passages without cached states.
Independent evidence encoding costs
$C_{\mathrm{enc}}
=
O\left(
L_e\sum_{i\in\mathcal{U}_q}
(T_i d_e^2+T_i^2d_e)
\right).$
This computation is avoided when all retrieved
evidence states are cached.
With dense projections, attention between query and evidence pooling costs $O(Nd_e^2)$.
Meanwhile, GAT message passing
costs $O(L_g(kd_e^2+md_e))$.
For explicit fact extraction, $F$ is the
number of candidate clauses, $W$ is their total word
count, and $W_{\mathrm{aux}}$ represents the word count of
the query and passage titles.
Word membership tests,
clause extraction, relevance scoring, cross-passage
matching, and deduplication require expected
$O(W_{\mathrm{aux}}+kW)$ computation.
Let $U_D$ denote tokenization and final budget-checking
costs.
Dynamic programming for the 0-1 knapsack problem
requires $O(F\tau)$ time.
Thus,
$C_{\mathrm{anchor}}
=
O\left(
W_{\mathrm{aux}}+kW+U_D+F\tau
\right).$
Therefore, including the query representation cost $C_q$,
the total Stage 2 cost is
$C_2={}C_{\mathrm{enc}}+C_{\mathrm{anchor}}+C_q
+O\left(
Nd_e^2+L_g(kd_e^2+md_e)
\right).$
Query-aware pooling and graph encode are
recomputed for each question.

\paragraph{\textbf{Stage 3: LLM Integration.}}
Normalization and low-rank memory adaptation
across the injection layers cost
$C_{\mathrm{proj}}
=
O\left(
JL_M[d_e+r(d_e+d_{\mathrm{KV}})]
\right).$
The projected KV is computed once per question
and reused throughout generation.
With text KV caching, standard decoder computation
for the prompt and answer costs
$C_{\mathrm{text}}
=
O\left(
L\left[
(n+a)d^2+(n^2+an+a^2)d
\right]
\right).$
At the selected layers, memory attention adds
$C_{\mathrm{mem}}
=
O\left(J(n+a)L_Md\right).$
Although memory KV is fixed, matching each new
query against memory keys and aggregating values
still incur computation.
Therefore,
$C_3
=
C_{\mathrm{proj}}+C_{\mathrm{text}}+C_{\mathrm{mem}}.$
Output vocabulary projection is not considered,
which is shared by the compared methods.

\paragraph{\textbf{Overall Cost and Efficiency Comparison.}}
The total online computation is
$C_{\mathrm{FuseRAG}}=C_1+C_2+C_3.$
For a full-text RAG baseline using the same
retrieved evidence, $p$ represents its complete
prompt length under the decoder tokenizer.
Its retrieval and decoder cost is
$C_{\mathrm{full}}
=
C_1+
O\left(
L\left[
(p+a)d^2+(p^2+ap+a^2)d
\right]
\right).$
FuseRAG reduces textual prefill attention from
$O(Lp^2d)$ to $O(Ln^2d)$ and token-wise computation
from $O(Lpd^2)$ to $O(Lnd^2)$.
Meanwhile, it introduces additional memory attention
cost of $O(J(n+a)L_Md)$ across prefill and decoding.
Unlike serialized evidence, memory positions do not
traverse the decoder's token-wise projections and
feed-forward networks at every layer.
Moreover, their evidence encodings can be amortized
across queries through caching without requiring
identical serialized prefixes.
During decoding, all $L$ layers attend to the shorter
textual context, while only $J$ layers additionally
access latent memory.
Assuming comparable per-position attention costs,
FuseRAG reduces decoding attention computation when
$L(p-n)>JL_M$.
This condition is typically achievable in FuseRAG, as latent memory is injected into only a small subset of decoder layers ($J \ll L$).

\paragraph{\textbf{Storage Cost.}}
FuseRAG requires 
$S_{\mathrm{FuseRAG}}
=
O\left(
Nd_e+JL_Md_{\mathrm{KV}}
+L(n+a)d_{\mathrm{KV}}
\right)$ excluding temporary activations,  
where $N$ is the total number of cached evidence tokens
considered in this analysis.
The three terms correspond to evidence representations,
memory KV states, and the textual KV cache, respectively.
The knapsack algorithm requires $O(\tau)$ working space,
or $O(F\tau)$ additional space when storing traceback
decisions.
In comparison, full-text RAG requires
$O(L(p+a)d_{\mathrm{KV}})$ for its decoder KV cache.
Although FuseRAG reduces textual KV-cache storage,
it introduces additional evidence and memory states.
Therefore, lower total storage is not guaranteed.
Cache lookup and data transfer may also affect
practical latency but are excluded from the
computation analysis.

\section{Experiments}
\subsection{Experimental Setup}
\subsubsection{Datasets}
we evaluate effectiveness and efficiency of FuseRAG on three public multihop Question Answering (QA) datasets,
including MusiQue \cite{trivedi2022musique}, HotpotQA \cite{yang2018hotpotqa} and 2WikiMultiHopQA \cite{ho2020constructing}: {MuSiQue} consists of questions constructed by composing single-hop questions into multi-hop reasoning chains.
{HotpotQA} contains questions that require reasoning over multiple Wikipedia articles and provides sentence to support facts.
{2WikiMultiHopQA} combines information from Wikipedia and Wikidata to construct questions with explicit reasoning paths.
These datasets are widely utilized in graph-based RAG.
Following prior work \cite{gutierrez2024hipporag,press2023measuring,trivedi2023interleaving,zhang2025sirerag,su2026clue}, 1,000 questions are sampled from the validation split of each dataset.
Meanwhile, we adopt two open-source backbone LLMs: Llama-3-8B and Qwen3.5-9B. Llama-3-8B has been widely adopted in recent RAG studies, facilitating comparison with prior work. 
Qwen3.5-9B is a newly released model that has attracted considerable attention, allowing us to further assess effectiveness.
\subsubsection{Metrics}
\paragraph{Effectiveness Metrics}
We report accuracy (Acc) and F1.
Acc measures the proportion of questions for which the generated answer contains a reference answer.
F1 measures the token-level overlap between the generated and reference answers and is computed as the harmonic mean of precision and recall.
We report both metrics averaged over the evaluated questions.

\paragraph{Efficiency Metrics}
Efficiency is measured by token consumption and runtime latency. We primarily focus online query processing, following the existing works.
Token consumption includes input and output tokens.
Runtime latency is the total time required to construct the index and the average end-to-end latency per online query.
We additionally report time to first token (TTFT), defined as the time costs from query submission until the first answer token is produced.
TTFT captures the initial response delay, while end-to-end latency reflects the total time required to complete a query.

\providecommand{\pmci}[1]{\,{\scriptsize\ensuremath{\pm\,#1}}}
\providecommand{\pmsens}[1]{\,{\scriptsize\ensuremath{\pm\,#1}\textsuperscript{S}}}

\begin{table*}[t]
\centering
\caption{QA performance and time-to-first-token (TTFT) across three datasets.}
\label{tab:main_results}
\begingroup
\fontsize{8.3}{9.8}\selectfont
\setlength{\tabcolsep}{2.7pt}
\renewcommand{\arraystretch}{1}
\setlength{\heavyrulewidth}{0.8pt}
\setlength{\lightrulewidth}{0.4pt}

\begin{adjustbox}{max totalsize={\textwidth}{0.90\textheight},center}
\begin{tabular}{@{}c@{\hspace{6pt}}l
rrr@{\hspace{9pt}}
rrr@{\hspace{9pt}}
rrr@{}}
\toprule
\multicolumn{2}{c}{}
& \multicolumn{3}{c}{\textbf{MuSiQue}}
& \multicolumn{3}{c}{\textbf{HotpotQA}}
& \multicolumn{3}{c}{\textbf{2Wiki}} \\
\cmidrule(lr){3-5}
\cmidrule(lr){6-8}
\cmidrule(lr){9-11}

\multicolumn{2}{c}{\textbf{Method}}
& \textbf{F1 (\%)}$\uparrow$
& \textbf{Acc. (\%)}$\uparrow$
& \textbf{TTFT (ms)}$\downarrow$
& \textbf{F1 (\%)}$\uparrow$
& \textbf{Acc. (\%)}$\uparrow$
& \textbf{TTFT (ms)}$\downarrow$
& \textbf{F1 (\%)}$\uparrow$
& \textbf{Acc. (\%)}$\uparrow$
& \textbf{TTFT (ms)}$\downarrow$ \\
\midrule

\multirow{20}{*}{
  \rotatebox[origin=c]{90}{\normalfont Llama-3-8B}
}
& Zero-shot
& 7.21\pmci{1.32} & 5.52\pmci{0.94} & 21.07
& 29.34\pmci{2.87} & 26.53\pmci{1.24} & 22.25
& 18.88\pmci{1.76} & 33.78\pmci{1.60} & 19.97 
\\

& VanillaRAG
& 12.28\pmci{1.14} & 14.21\pmci{1.43} & 86.23
& 40.83\pmci{2.09} & 53.87\pmci{2.78} & 89.52
& 34.22\pmci{2.54} & 41.06\pmci{2.87} & 92.75 \\
\cmidrule{2-11}
& KGP
& 8.13\pmci{0.26}  & 9.43\pmci{0.80} & 3291.84
& 22.48\pmci{0.42} & 23.57\pmci{0.67} & 3467.37
& 15.78\pmci{1.37} & 25.84\pmci{1.73} & 3753.40 \\

& ToG & 3.86\pmci{0.78} & 5.25\pmci{0.53} & 4053.57
& 3.76\pmci{0.67} & 28.79\pmci{1.76} & 4267.35
& 6.37\pmci{0.89} & 42.84\pmci{2.48}& 4553.16 \\

& RAPTOR
& 10.42\pmci{1.21} & 16.76\pmci{1.33} & 107.25
& 34.46\pmci{1.57} & 57.12\pmci{1.11} & 110.48
& 19.62\pmci{1.71} & 45.42\pmci{1.64} & 132.55 \\

& G-Retriever
& 12.25\pmci{1.28} & 14.76\pmci{1.57} & 2916.84
& 35.78\pmci{2.31} & 53.56\pmci{2.78} & 3137.67
& 24.64\pmci{2.87} & 44.58\pmci{2.41} & 3266.33 \\

& LGraphRAG
& 9.46\pmci{2.52} & 14.77\pmci{2.68} & 167.35
& 24.81\pmci{1.63} & 43.90\pmci{1.78} & 189.72
& 15.23\pmci{2.02} & 42.34\pmci{2.14} & 213.15 \\

& FastGraphRAG
& 18.24\pmci{2.53} & 17.68\pmci{2.78} & 973.79
& 29.42\pmci{2.41} & 45.76\pmci{3.21} & 1032.65
& 18.35\pmci{2.37} & 29.89\pmci{2.56} & 1075.30 \\

& LLightRAG
& 7.91\pmci{1.09} & 6.89\pmci{1.48} & 971.23
& 25.66\pmci{1.62} & 36.72\pmci{1.81} & 962.46
& 17.47\pmci{1.35} & 33.85\pmci{1.56} & 968.57 \\

& GLightRAG
& 5.25\pmci{0.77} & 5.73\pmci{0.96} & 984.69
& 17.48\pmci{1.12} & 27.72\pmci{1.49} & 979.16
& 11.67\pmci{1.56} & 23.66\pmci{1.73} & 987.35 \\

& HLightRAG
& 8.57\pmci{0.71} & 9.24\pmci{0.84} & 1013.77
& 25.93\pmci{1.31} & 38.61\pmci{1.64} & 1214.25
& 14.32\pmci{1.48} & 28.60\pmci{1.79} & 1265.32 \\

& GFM-RAG
& 29.67\pmci{1.26} & 28.64\pmci{1.34} & 2721.32
& 61.92\pmci{1.48} & 63.26\pmci{1.72} & 2836.15
& 44.28\pmci{1.75} & 45.80\pmci{1.66} & 2987.57 \\

& KET-RAG
& 24.25\pmci{2.32} & 22.51\pmci{2.88} & 1186.14
& 51.75\pmci{2.57} & 53.16\pmci{3.14} & 1212.31
& 27.90\pmci{2.79} & 41.40\pmci{3.43} & 1394.28 \\

& Clue-RAG-0.5
& 29.27\pmci{1.40} & 28.32\pmci{1.83} & 4219.46
& 55.96\pmci{1.73} & 62.08\pmci{1.56} & 4524.77
& 35.83\pmci{1.64} & 48.20\pmci{1.79} &  4667.38\\

& Clue-RAG-1.0
& 28.56\pmci{2.30} & 28.70\pmci{2.81} & 3463.22
& 56.74\pmci{2.66} & 61.95\pmci{2.76} & 3626.41
& 36.53\pmci{2.52} & 50.40\pmci{3.20} & 3789.67 \\

\cmidrule{2-11}

\rowcolor{oursblue}
\cellcolor{white}
& \textbf{Ours-32}
& 23.23\pmci{1.04} & 24.17\pmci{1.62} & \textbf{58.13}
& 51.19\pmci{1.92} & 59.38\pmci{2.71} & \textbf{44.26}
& 38.95\pmci{2.94} & 48.67\pmci{3.13} & \textbf{47.37} \\

\rowcolor{oursblue}
\cellcolor{white}
& \textbf{Ours-64}
& \underline{30.40}\pmci{1.58} & \underline{28.72}\pmci{2.03} & \underline{61.48}
& \underline{58.92}\pmci{2.84} & \underline{62.64}\pmci{3.10} & \underline{48.91}
& \underline{45.73}\pmci{3.03} & \underline{41.50}\pmci{3.10} & \underline{51.87} \\

\rowcolor{oursblue}
\cellcolor{white}
& \textbf{Ours-128}
& \textbf{41.08}\pmci{1.74} & \textbf{34.50}\pmci{2.52} & 65.74
& \textbf{63.05}\pmci{2.74} & \textbf{64.12}\pmci{3.26} & 57.13
& \textbf{55.35}\pmci{2.98} & \textbf{51.70}\pmci{3.20} & 62.25 \\

\midrule

\multirow{20}{*}{
  \rotatebox[origin=c]{90}{\normalfont Qwen3.5-9B}
}
& Zero-shot
& 7.90\pmci{0.53} & 5.94\pmci{0.77} & 77.23
& 19.77\pmci{1.02} & 20.20\pmci{1.34} & 75.37
& 23.64\pmci{1.64} & 24.62\pmci{1.60}& 62.26 \\

& VanillaRAG
& 30.71\pmci{1.24} & 25.40\pmci{1.45} & 166.54
& 59.66\pmci{2.03} & 51.30\pmci{2.31} & 176.22
& 46.12\pmci{1.79} & 48.60\pmci{1.64} & 181.67 \\
\cmidrule{2-11}
& KGP
& 23.62\pmci{1.58} & 16.91\pmci{1.12} & 4991.32
& 41.74\pmci{3.06} & 30.55\pmci{2.85} & 5433.74
& 31.61\pmci{2.88} & 30.04\pmci{2.84} & 5666.83 \\

& ToG
& 8.98\pmci{0.36} & 12.10\pmci{1.53} & 5029.18
& 8.88\pmci{0.81} & 35.64\pmci{2.97} & 5294.68
& 11.49\pmci{1.98} & 49.69\pmci{3.10} & 5655.44 \\

& RAPTOR
& 22.76\pmci{2.62} & 23.13\pmci{3.53} & 247.36
& 45.57\pmci{3.09} & 60.92\pmci{3.02} & 255.53
& 28.24\pmci{2.79} & 47.82\pmci{3.19} & 306.28 \\

& G-Retriever
& 22.87\pmci{1.74} & 22.25\pmci{1.63} & 3331.69
& 55.04\pmci{3.25} & 60.54\pmci{2.78} & 4916.75
& 40.47\pmci{3.04} & 48.78\pmci{3.07} & 4930.02 \\

& LGraphRAG
& 9.12\pmci{0.67} & 32.60\pmci{2.85} & 278.92
& 44.07\pmci{3.12} & 50.88\pmci{2.87} & 297.55
& 31.06\pmci{2.68} & 46.54\pmci{3.02} & 322.89 \\

& FastGraphRAG
& 28.86\pmci{1.53} & 25.16\pmci{1.78} & 1204.88
& 48.68\pmci{3.62} & 52.74\pmci{3.01} & 1618.57
& 34.18\pmci{2.94} & 41.49\pmci{3.05} & 1623.67 \\

& LLightRAG
& 23.04\pmci{1.09} & 19.30\pmci{1.48} & 1505.72
& 55.75\pmci{2.87} & 50.20\pmci{3.05} & 1513.67
& 46.47\pmci{1.31} & 45.50\pmci{1.54} & 1465.92 \\

& GLightRAG
& 18.53\pmci{1.47} & 16.10\pmci{2.17} & 1513.67
& 36.74\pmci{2.89} & 34.70\pmci{2.95} & 1525.41
& 31.61\pmci{1.35} & 40.10\pmci{1.48} & 1485.98 \\

& HLightRAG
& 28.86\pmci{1.22} & 25.85\pmci{1.64} & 1636.49
& 46.22\pmci{3.14} & 55.34\pmci{3.06} & 1960.91
& 34.61\pmci{2.99} & 45.21\pmci{3.03} & 2043.28 \\

& GFM-RAG
& 27.96\pmci{1.45} & 33.20\pmci{2.03}& 2676.18
& 60.21\pmci{3.03} & 67.82\pmci{2.90} & 1692.36
& 42.57\pmci{3.18} & 50.36\pmci{3.26} & 1782.18 \\

& KET-RAG
& 18.29\pmci{2.57} & 22.06\pmci{3.21} & 1398.55
& 36.53\pmci{2.98} & 31.66\pmci{2.87} & 1838.76
& 38.42\pmci{3.01} & 45.88\pmci{3.09} & 2114.42 \\

& Clue-RAG-0.5
& 39.87\pmci{3.56} & 35.80\pmci{4.27} & 4926.88
& 62.46\pmci{2.91} & 53.71\pmci{2.83} & 5075.57
& 45.26\pmci{3.00} & 50.16\pmci{3.08} & 5131.92 \\

& Clue-RAG-1.0
& 40.62\pmci{3.21} & 36.20\pmci{4.10} & 5027.59
& 60.75\pmci{2.59} & 55.40\pmci{3.10} & 4326.24
& 50.36\pmci{2.89} & 46.60\pmci{3.10} & 4487.03 \\

\cmidrule{2-11}

\rowcolor{oursblue}
\cellcolor{white}
& \textbf{Ours-32}
& 37.59\pmci{1.78} & 29.45\pmci{1.84} & \textbf{164.57}
& 41.35\pmci{2.78} & 33.60\pmci{3.00} & \textbf{141.82}
& 33.56\pmci{2.78} & 29.70\pmci{2.80} & \textbf{143.13} \\

\rowcolor{oursblue}
\cellcolor{white}
& \textbf{Ours-64}
& \underline{41.94}\pmci{2.27} & \underline{34.60}\pmci{2.52} & \underline{168.45}
& \underline{53.22}\pmci{2.82} & \underline{45.60}\pmci{3.20} & \underline{143.78}
& \underline{39.71}\pmci{2.87} & \underline{36.30}\pmci{3.00} & \underline{149.86} \\

\rowcolor{oursblue}
\cellcolor{white}
& \textbf{Ours-128}
& \textbf{44.64}\pmci{2.76} & \textbf{37.90}\pmci{2.91} & 174.52
& \textbf{64.51}\pmci{2.72} & \textbf{56.20}\pmci{3.20} & 147.49
& \textbf{51.63}\pmci{2.93} & \textbf{48.80}\pmci{3.10} & 154.77 \\

\bottomrule
\end{tabular}
\end{adjustbox}
\endgroup
\end{table*}

\begin{figure*}[t]
\centering
\captionsetup[subfigure]{font=small,skip=2pt}

\begin{subfigure}[t]{0.33\textwidth}
  \centering
  \includegraphics[width=\linewidth]{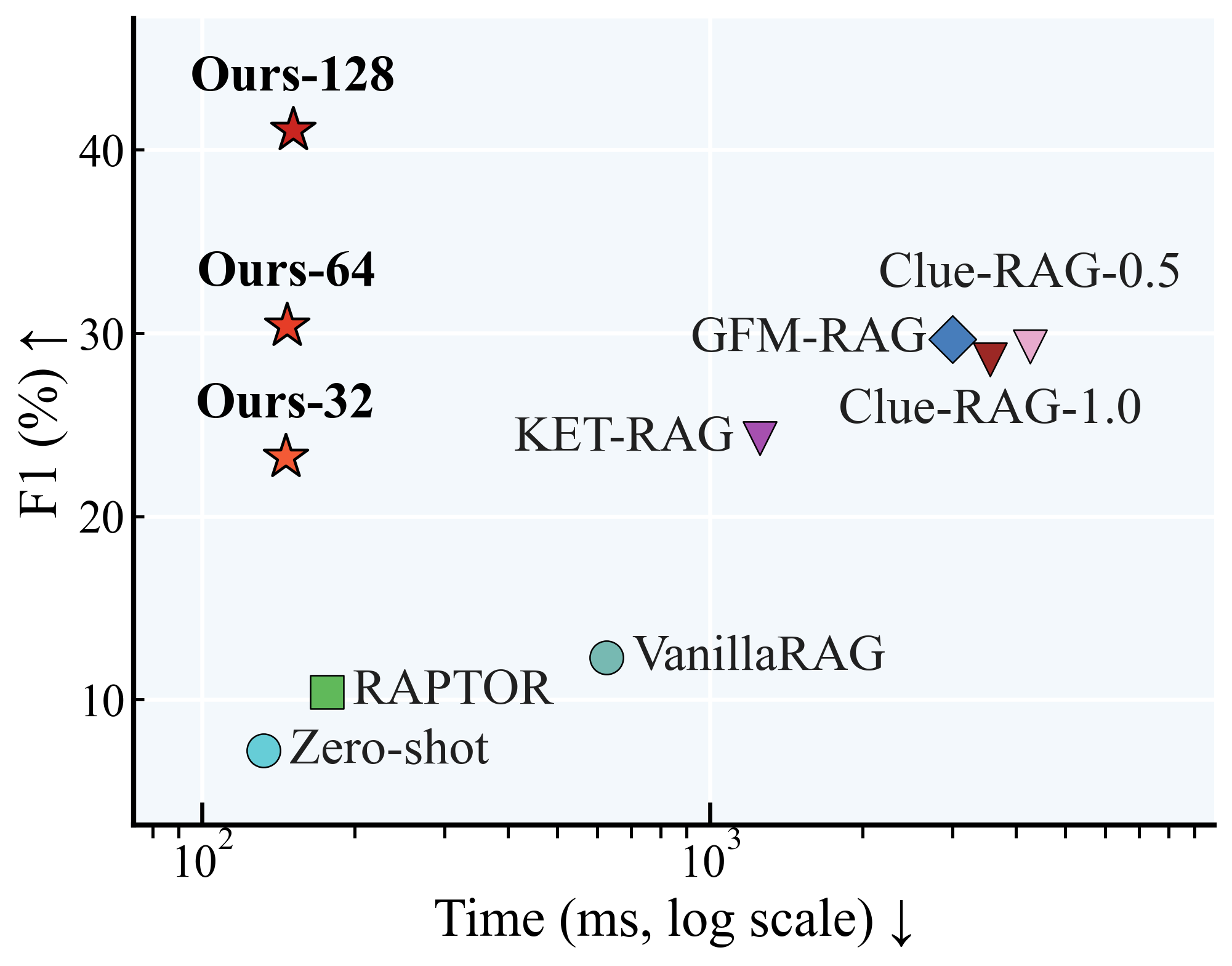}
  \caption{MuSiQue: time--F1}
\end{subfigure}\hfill
\begin{subfigure}[t]{0.33\textwidth}
  \centering
  \includegraphics[width=\linewidth]{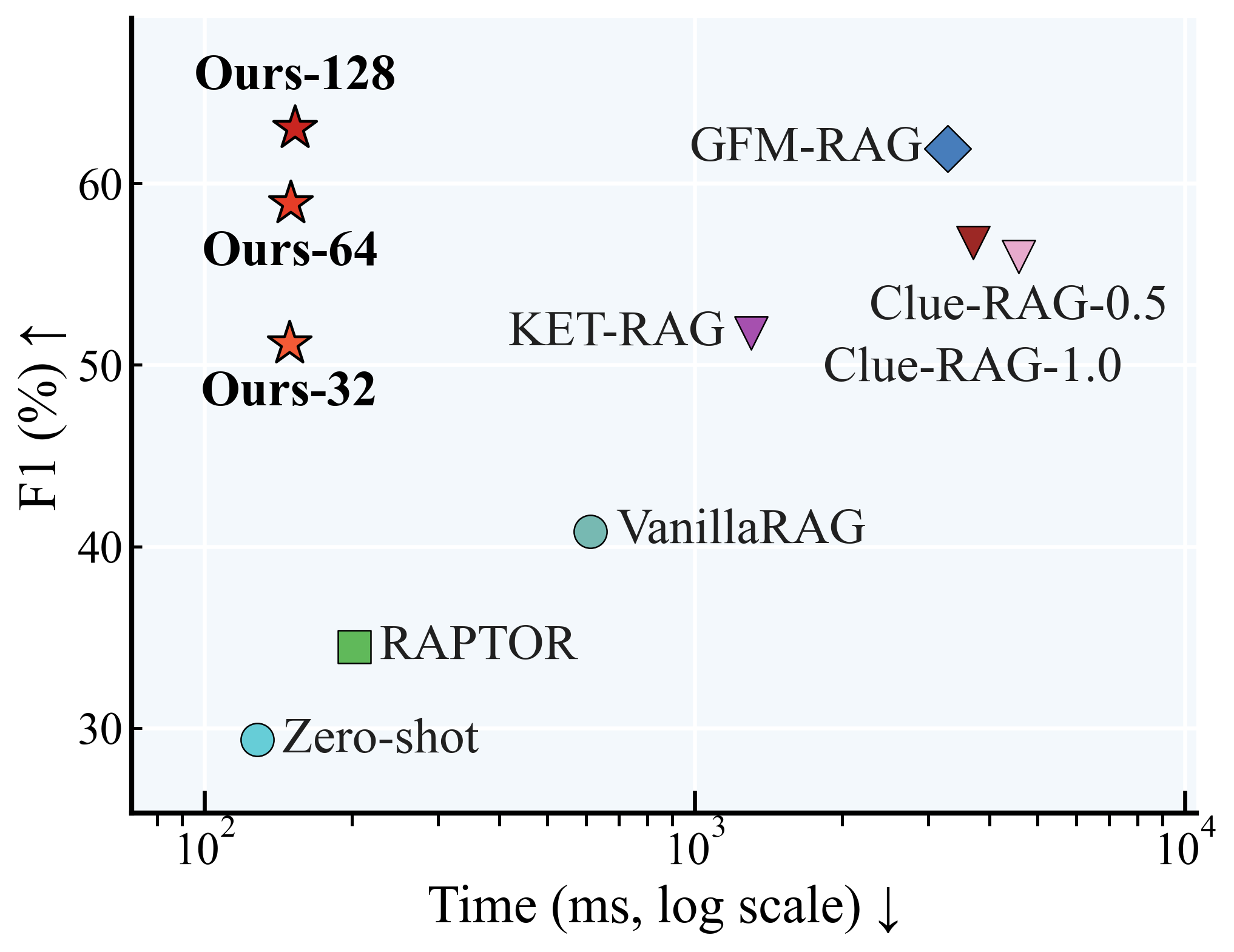}
  \caption{HotpotQA: TTFT--F1}
\end{subfigure}\hfill
\begin{subfigure}[t]{0.33\textwidth}
  \centering
  \includegraphics[width=\linewidth]{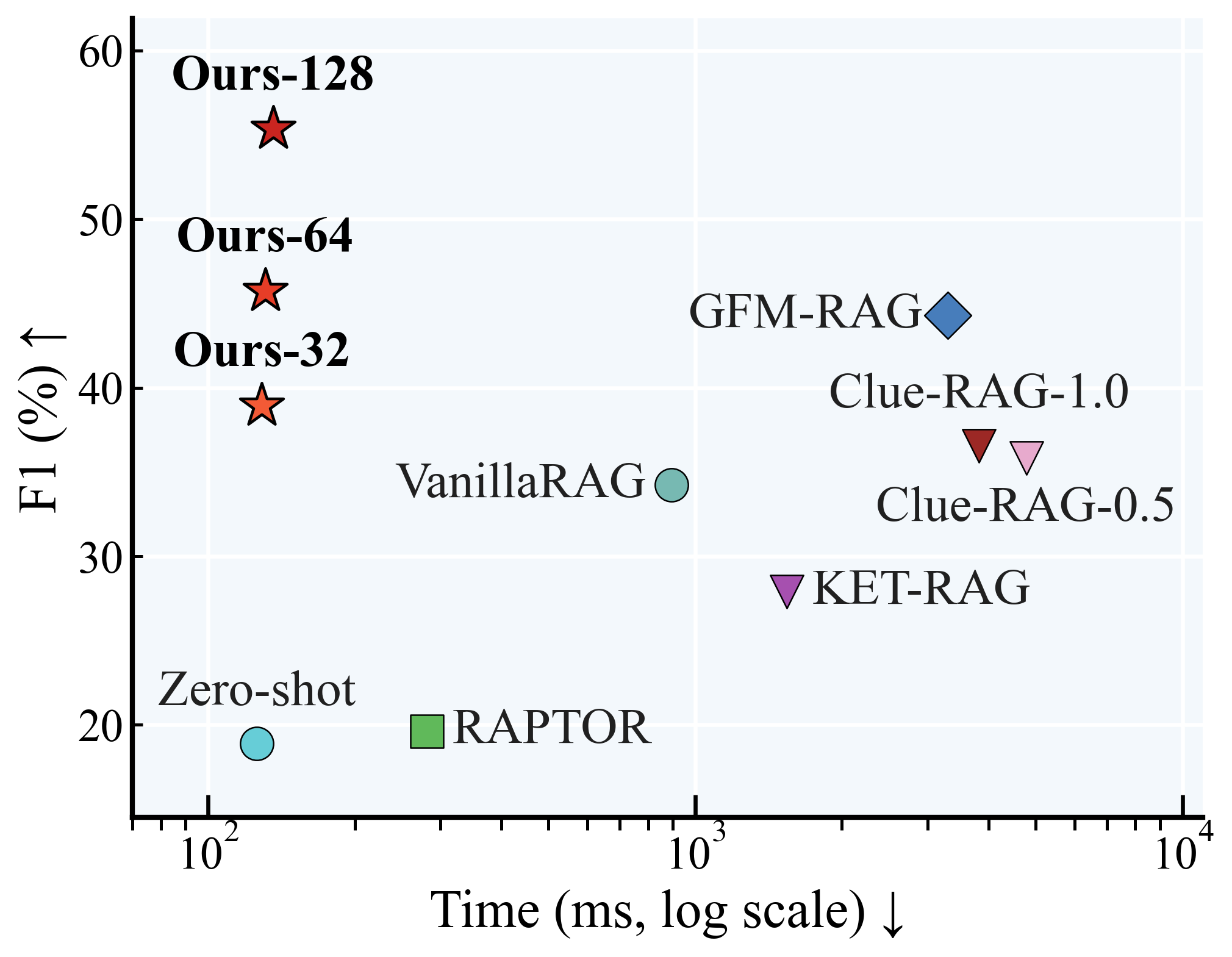}
  \caption{2Wiki: time--F1}
\end{subfigure}

\vspace{5pt}

\begin{subfigure}[t]{0.33\textwidth}
  \centering
  \includegraphics[width=\linewidth]{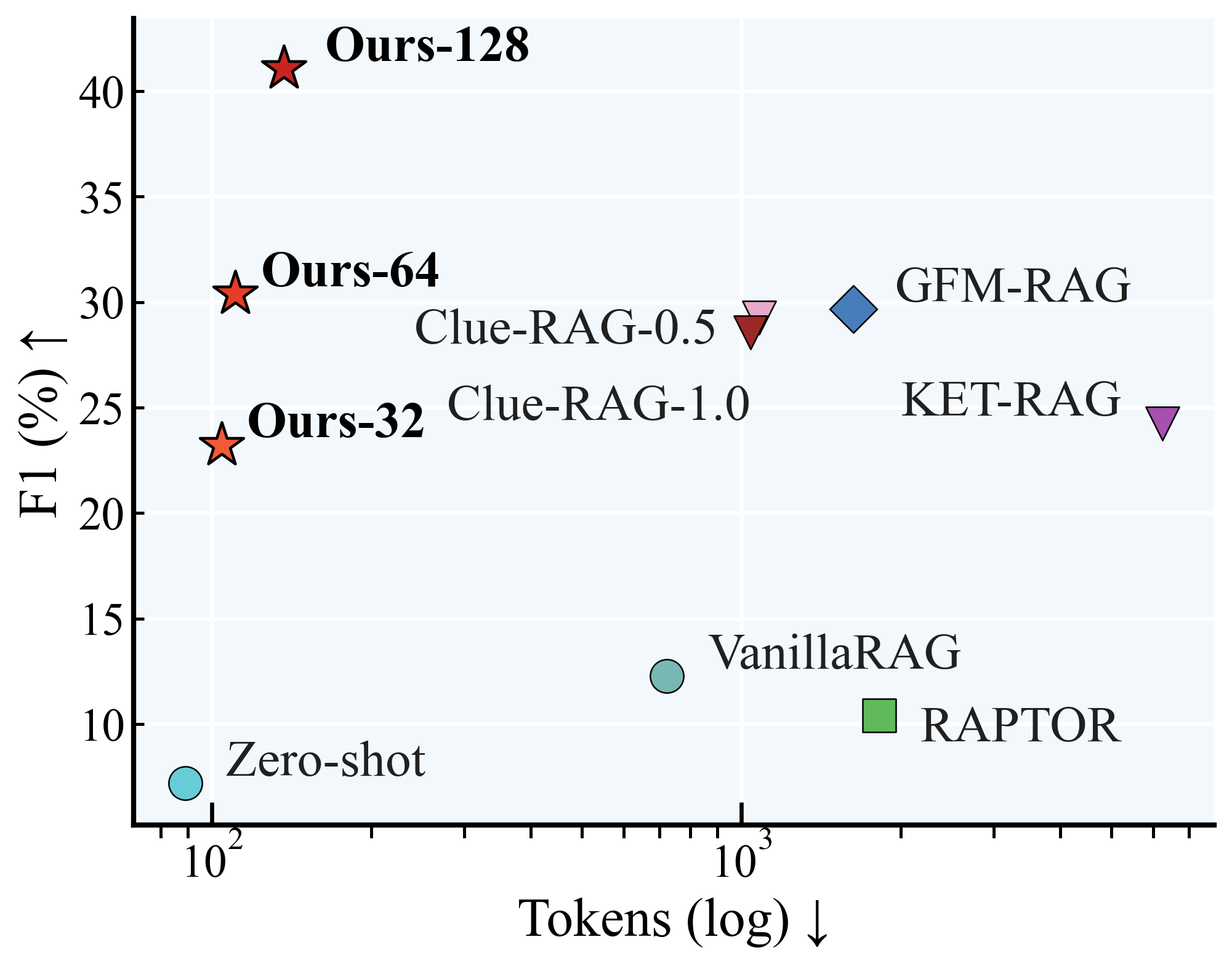}
  \caption{MuSiQue: Tokens--F1}
\end{subfigure}\hfill
\begin{subfigure}[t]{0.33\textwidth}
  \centering
  \includegraphics[width=\linewidth]{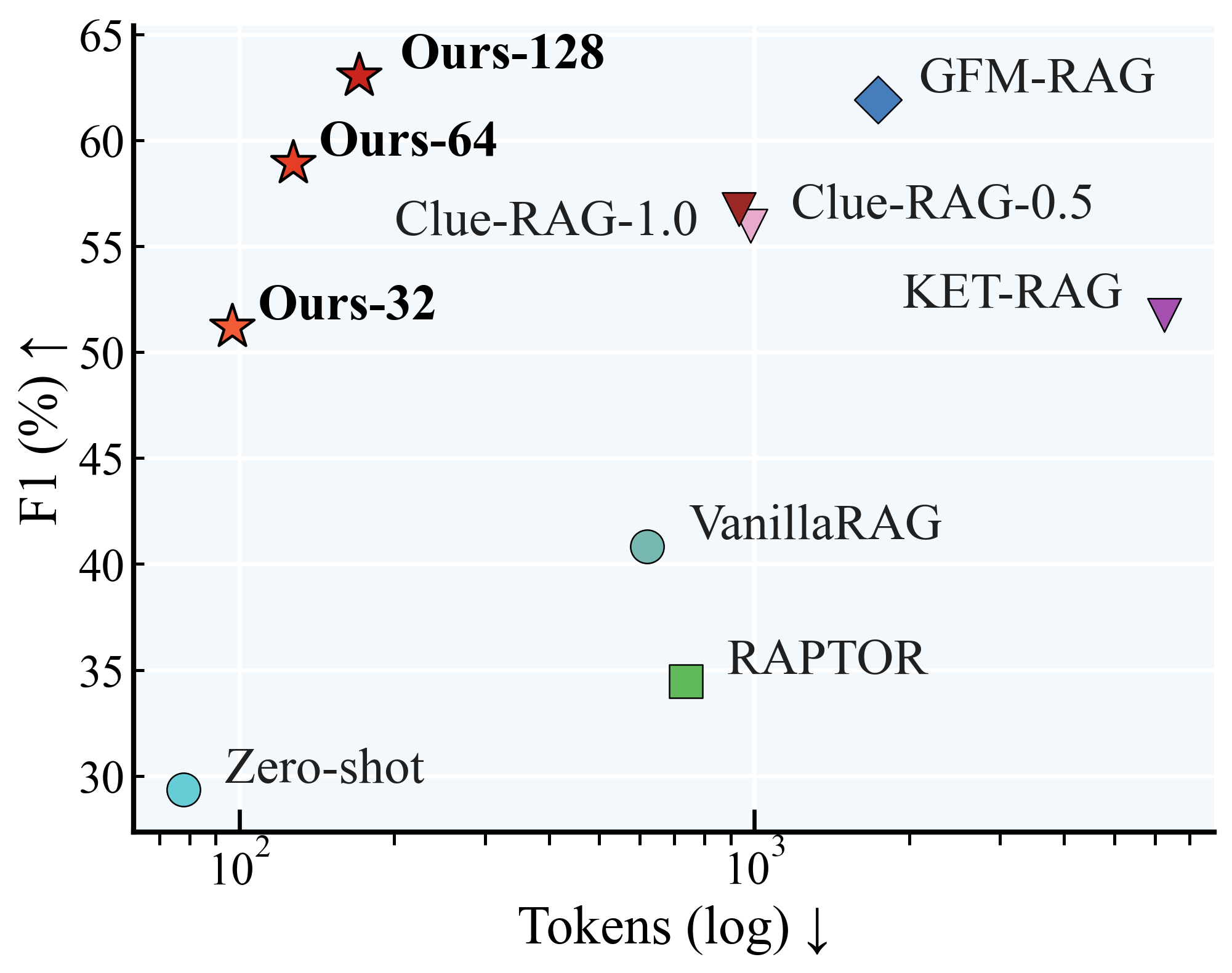}
  \caption{HotpotQA: Tokens--F1}
\end{subfigure}\hfill
\begin{subfigure}[t]{0.33\textwidth}
  \centering
  \includegraphics[width=\linewidth]{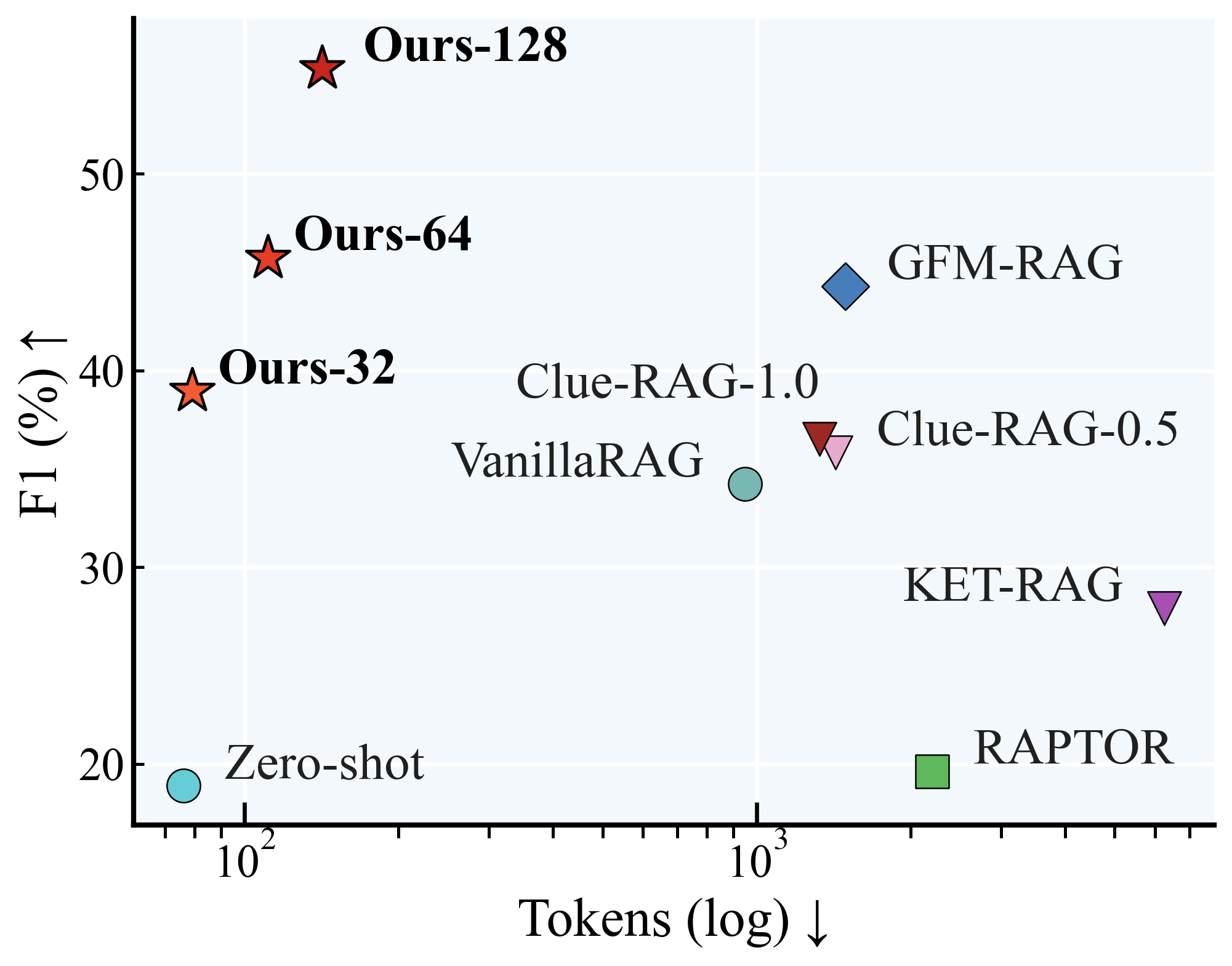}
  \caption{2Wiki: Tokens--F1}
\end{subfigure}

\caption{\textbf{QA quality versus inference cost using Llama-3-8B.}
The horizontal axes show TTFT or token cost on a logarithmic scale,
and the vertical axes show F1.
The upper-left region indicates higher quality with lower cost.
Stars denote our methods.}
\label{fig:llama_quality_cost}
\end{figure*}

\begin{figure*}[t]
    \centering
    \includegraphics[width=\textwidth]
    {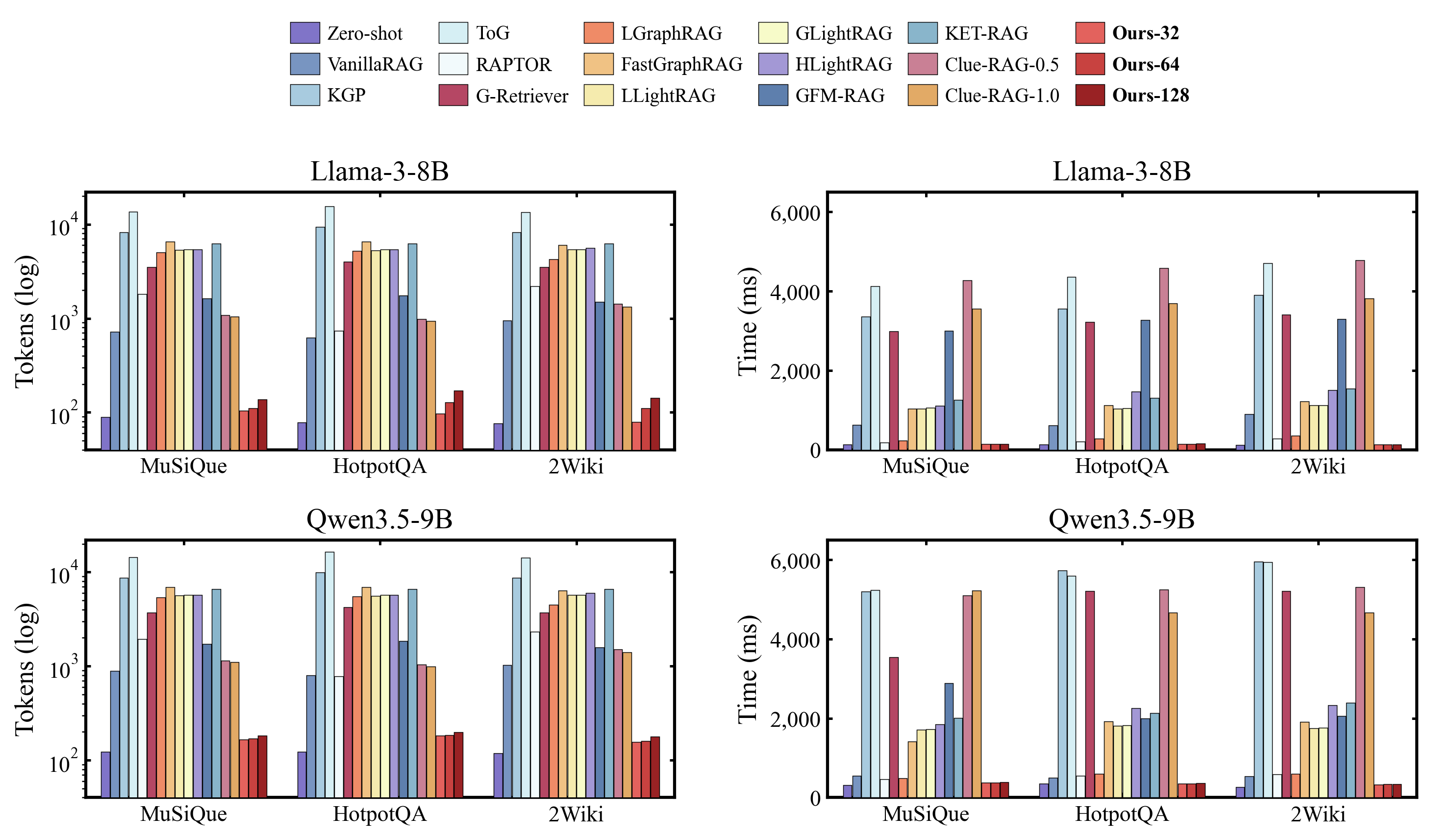}
    \caption{\textbf{Token and time costs.}}
    \label{fig:inference_efficiency}
\end{figure*}

\subsubsection{Comparison Methods}
We compare FuseRAG with two basic baselines and a range of graph-based RAG methods, covering the classic approches and the state-of-the-art \textit{(SOTA)} methods recently:

\begin{itemize}
    \item \textbf{Zero-shot.}
    The backbone LLM directly answers each question without retrieving external evidence. This baseline evaluates the model's ability to answer questions using its internal knowledge.

    \item \textbf{VanillaRAG.}
    A conventional RAG baseline retrieves text chunks based on their semantic similarity to the query and provides the retrieved chunks to the LLM for answer generation.

    \item \textbf{KGP.}
    Knowledge Graph Prompting constructs a graph over document structures, with edges capturing semantic and structural associations. An LLM-based agent navigates the graph to collect supporting passages for answer generation.

    \item \textbf{ToG.}
    An KG-based GraphRAG for online query reasoning. ToG conducts inference on manually designed KG.

    \item \textbf{RAPTOR.}
    RAPTOR recursively embeds, clusters, and summarizes text chunks to construct a hierarchical tree. It retrieves information at different levels of abstraction, combining detailed passages with higher-level summaries.

    \item \textbf{G-Retriever.}
    G-Retriever retrieves a query-relevant subgraph from a textual graph and integrates graph representations with an LLM through soft prompting for answer generation.

    \item \textbf{LGraphRAG.}
    The local search variant of GraphRAG retrieves query-relevant entities and combines their associated graph information with supporting text chunks to construct the context for answer generation.

    \item \textbf{FastGraphRAG.}
    FastGraphRAG uses personalized PageRank to explore a knowledge graph and identify relevant evidence for answering the query.

    \item \textbf{LLightRAG.}
    The local retrieval mode of LightRAG focuses on query-relevant entities and their associated information, providing context for questions about specific facts.

    \item \textbf{GLightRAG.}
    The global retrieval mode of LightRAG focuses on relationships associated with high-level query keywords, retrieving broader relational context.

    \item \textbf{HLightRAG.}
    The hybrid retrieval mode of LightRAG combines local and global retrieval results, incorporating both entity-specific information and broader relational context.

    \item \textbf{GFM-RAG.}
    GFM-RAG employs a pretrained graph neural network to model query-knowledge relationships and retrieve supporting evidence through reasoning over graph structure.

    \item \textbf{KET-RAG.}
    KET-RAG constructs a knowledge graph skeleton from selected text chunks and a lightweight text--keyword bipartite graph over the corpus. Retrieval combines evidence from both structures to balance retrieval quality and indexing cost.

    \item \textbf{Clue-RAG.}
Clue-RAG constructs a multi-partite graph linking text chunks, knowledge units, and entities, and retrieves supporting evidence through query-driven iterative retrieval. We evaluate two configurations: Clue-RAG-0.5 and Clue-RAG-1.0, with extraction budget coefficients $\alpha=0.5$ and $\alpha=1.0$, respectively. Clue-RAG-0.5 uses LLM-based knowledge unit extraction for chunks within a budget of 50\% of the corpus tokens and lightweight NLP tools for the remainder, whereas Clue-RAG-1.0 applies LLM-based extraction to all chunks.
\item \textbf{FuseRAG (ours).}
We evaluate FuseRAG with short-fact budgets of 32, 64, and 128 (Ours-32, Ours-64, and Ours-128), respectively.
Each budget specifies the maximum number of short facts provided to the LLM through the explicit path as semantic anchors.
These configurations allow us to examine how the amount of explicit factual evidence affects QA effectiveness, token consumption, and response latency.
\end{itemize}
\subsubsection{Implementation}
To ensure a fair comparison, we implement all methods within a unified experimental framework with their default configurations \cite{su2026clue}.
All experiments are conducted on two NVIDIA RTX PRO 6000 GPUs.

\subsection{Main Results}

Table~\ref{tab:main_results} reports the QA performance and TTFT of all methods.
The results demonstrate that FuseRAG achieves superiority on QA quality.

With Llama-3-8B, Ours-128 gains the highest F1 and accuracy on all three datasets.
Specifically, it obtains F1 scores of 41.08\%, 63.05\%, and 55.35\% on MuSiQue, HotpotQA, and 2WikiMultiHopQA (2Wiki), respectively.
Compared with the strongest competing F1 score on other methods, these results represent improvements of 11.41, 1.13, and 19.52\%.
Ours-128 also has an accuracy of 34.50\%, 64.12\%, and 51.70\%, improving 5.86\%, 0.86\%, and 1.30\%.
Qwen3.5-9B shows the same trend.
On MuSiQue, FuseRAG's F1 and accuracy reach 44.64\% and 37.90\%, improving over the \textit{SOTA} scores by 4.02 and 1.70\%.
The significant improvement indicates that FuseRAG is effective at integrating evidence for multi-hop question answering.

FuseRAG also has substantially lower TTFT than all other RAG methods on both backbone LLMs.
With Llama-3-8B, its TTFT ranges from 44.26 to 65.74 ms across the three datasets and budget configurations.
In comparison, RAPTOR, the fastest graph-based RAG method in these experiments, requires 107.25--132.55 ms.
With Qwen3.5-9B, FuseRAG maintains a TTFT of 141.82-174.52 ms, again below RAPTOR's 247.36-306 ms.
Notably, all FuseRAG configurations also achieve lower TTFT than VanillaRAG in the corresponding settings.
These results show that FuseRAG provides strong QA performance while keeping the initial response delay low.

We observe that FuseRAG reaches competitive QA performance even with a small explicit evidence budget. For example, on HotpotQA with Llama-3-8B, FuseRAG with 32 tokens has an F1 of 51.19\%, comparable to KET-RAG (51.75\%), while reducing TTFT from 1212.31 ms to 44.26 ms. This indicates that latent evidence effectively supplements the limited textual context.
Increasing the token budget generally improves F1, but the gains vary across datasets and backbone models. Notably, increasing the budget from 64 to 128 tokens improves F1 from 30.40\% to 41.08\% on MuSiQue with Llama-3-8B, and from 53.22\% to 64.51\% on HotpotQA with Qwen3.5-9B. However, larger budgets also increase TTFT, and Accuracy does not always improve monotonically. These observations suggest that different tasks and models require different amounts of explicit textual grounding. We further investigate this effect in Section~\ref{sec:para}. Overall, a 64-token budget strikes a balance between QA performance and inference efficiency in our experiments.

\begin{table*}[t]
\centering
\caption{Ablation study of FuseRAG.}
\label{tab:ablation}
\resizebox{\textwidth}{!}{
\begin{tabular}{lccccccccc}
\toprule
\textbf{Variant}
& \textbf{Explicit}
& \textbf{Token-level}
& \textbf{Query}
& \textbf{Graph}
& \textbf{Reuse}
& \multicolumn{2}{c}{\textbf{QA Performance}}
& \multicolumn{2}{c}{\textbf{Efficiency}} \\
\cmidrule(lr){7-8}
\cmidrule(lr){9-10}
& \textbf{Anchors}
& \textbf{Memory}
& \textbf{Conditioning}
& \textbf{Fusion}
& \textbf{Enabled}
& \textbf{F1}
& \textbf{Acc}
& \textbf{TTFT}
& \textbf{Speedup}\\
\midrule

FuseRAG
& \checkmark
& \checkmark
& \checkmark
& \checkmark
& \checkmark
& 30.40 & 28.72 & 61.48 & 1.00×\\

w/o Explicit Anchors
& $\times$
& \checkmark
& \checkmark
& \checkmark
& \checkmark
& 23.14& 23.88 & 58.13& 1.06×\\

w/o Latent Memory
& \checkmark
& $\times$
& $\times$
& $\times$
& $\times$
& 22.63 & 18.50 & 27.81& 2.21×\\

w/o Query Conditioning
& \checkmark
& \checkmark
& $\times$
& \checkmark
& \checkmark
& 23.46& 20.33& 60.81& 1.01×\\

w/o Graph Fusion
& \checkmark
& \checkmark
& \checkmark
& $\times$
& \checkmark
& 26.48 & 22.20 & 59.42& 1.03×\\

Passage-level Memory
& \checkmark
& $\times$
& \checkmark
& \checkmark
& \checkmark
& 23.91& 20.45& 46.37& 1.33×\\

w/o Evidence Reuse
& \checkmark
& \checkmark
& \checkmark
& \checkmark
& $\times$
& 30.40 & 28.72 & 81.49& 0.75×\\

\bottomrule
\end{tabular}
}
\end{table*}

\subsection{Efficiency}
\label{sec:efficiency}

To further evaluate the efficiency of FuseRAG, we compare its
token consumption and inference time with representative
graph-based RAG methods across three QA datasets.
Figure~\ref{fig:llama_quality_cost} illustrates the trade-off between
QA performance and inference cost, while
Figure~\ref{fig:inference_efficiency} reports token consumption and
end-to-end inference time on both backbone models.

As shown in Figure~\ref{fig:llama_quality_cost}, FuseRAG exhibits a strong trade-off between answer quality and inference cost.
Across all three datasets, its configurations are generally concentrated in the upper-left region, indicating competitive or higher F1 scores with lower TTFT and token consumption.
In particular, the 128-token configuration obtains the highest
F1 scores across all three datasets on Llama-3-8B, while
maintaining substantially lower inference costs than
competing methods.
Even with only 32 explicit tokens, FuseRAG delivers
competitive QA performance at significantly lower cost.
Increasing the explicit evidence budget further improves F1,
with relatively small increases in TTFT.
These results indicate that latent evidence effectively
supplements limited textual grounding.

Figure~\ref{fig:inference_efficiency} further compares textual token
consumption and end-to-end inference time across Llama-3-8B
and Qwen3.5-9B.
FuseRAG consistently reduces both costs compared with
representative graph-based RAG baselines.
Methods relying on extensive evidence serialization or
iterative LLM processing incur considerably higher overhead.
In contrast, FuseRAG retains only compact explicit facts
in the prompt while incorporating richer evidence and
graph structure through latent representations.

These improvements stem from two complementary mechanisms.
First, shorter textual contexts reduce decoder computation,
while latent memory is accessed only at selected Transformer
layers.
Second, reusable evidence representations eliminate
redundant encoding across queries without requiring identical
serialized prefixes.
Together, these mechanisms improve inference efficiency
while maintaining competitive performance on multi-hop QA.
Notably, the reduction in textual token consumption is accompanied by lower end-to-end inference time, indicating that the additional latent processing overhead does not offset the computational savings.

\subsection{Parameters}
\label{sec:para}
\begin{figure}[t]
    \centering
    \includegraphics[width=.95\columnwidth]{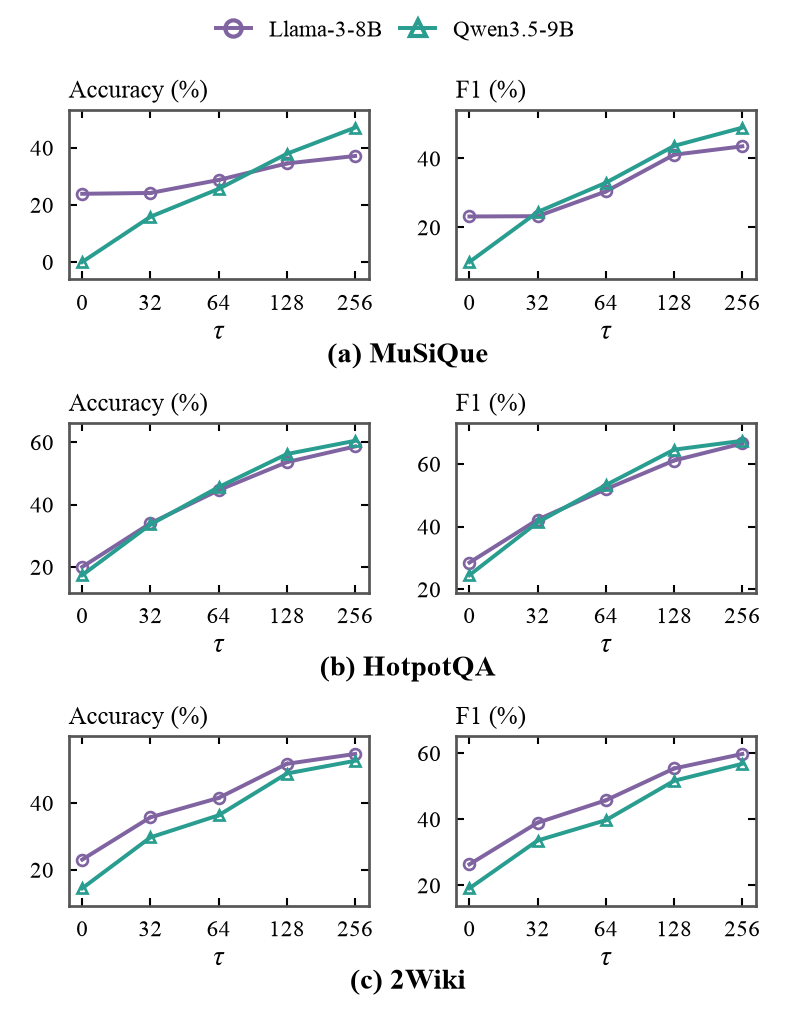}
    \caption{Answer quality across explicit token budgets $\tau$.}
    \label{fig:token-budget}
\end{figure}

\begin{figure}[t]
    \centering
    \includegraphics[width=.95\columnwidth]{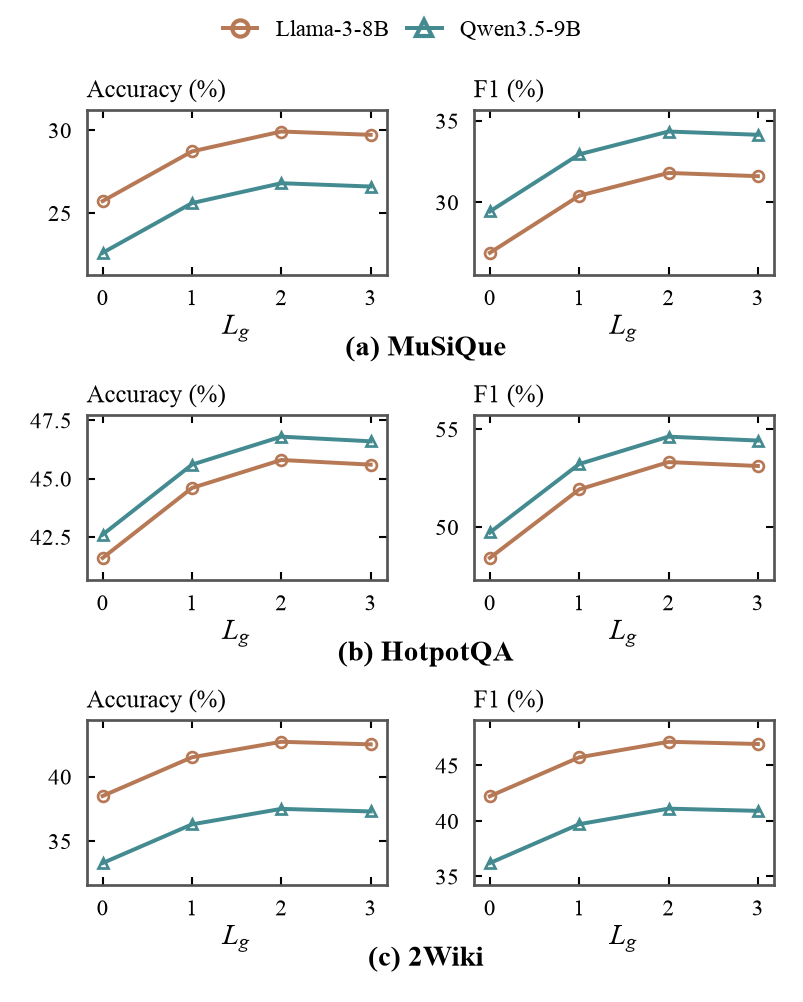}
    \caption{answer quality across graph encoder layers $L_g$.}
    \label{fig:graph-rounds}
\end{figure}
\subsection{Ablation Study}
Table~\ref{tab:ablation} examines how FuseRAG's components
contribute to answer quality and inference efficiency.
Removing explicit anchors lowers F1 from 30.40\% to 23.14\% and
accuracy from 28.72\% to 23.88\%, while providing only a
1.06$\times$ speedup. These anchors supply compact, directly
readable facts in the prompt, complementing the richer
evidence accessed through latent memory. The substantial
quality decrease suggests that latent evidence benefits from
a small amount of explicit textual context. Conversely,
removing latent memory reduces F1 to 22.63 and accuracy to
18.50, despite achieving a 2.21$\times$ speedup. This variant
retains the textual anchors but removes the latent evidence
branch, including its query conditioning and graph fusion.
The result shows that compact facts alone are insufficient
to match the full model and highlights the contribution of
the additional evidence available through memory.

Query conditioning and graph fusion serve distinct roles
within the latent branch. Query conditioning directs
attention pooling toward question-relevant evidence tokens
and conditions cross-passage message computation on the
current question. Removing it lowers F1 by 6.94\% and
accuracy by 8.39\%, with only a 1.01$\times$ speedup.
This suggests that adapting evidence representations to the
question is important for effective memory use. Graph fusion
allows connected passages to exchange information and
produces auxiliary structural representations alongside the
original evidence states. Without it, F1 decreases to 26.48
and accuracy to 22.20\%, while the speedup is limited to
1.03$\times$. Thus, structural fusion provides additional
benefits beyond independently encoded, query-conditioned
passage representations, at a relatively small observed
latency cost.

Memory granularity also affects the quality--efficiency
trade-off. Replacing token-level memory with passage-level
memory gains a 1.33$\times$ speedup, but reduces F1 to
23.91\% and accuracy to 20.45\%. Token-level memory exposes
multiple evidence positions to the decoder, enabling
selective access to finer-grained information rather than
relying on a passage-level summary. The results support
retaining this granularity when answer quality is the
priority. Finally, disabling evidence reuse leaves both
F1 and accuracy unchanged but increases TTFT from 61.48
to 81.49. Caching query-independent passage representations
therefore reduces TTFT by 24.6\% relative to online
re-encoding without changing answer quality. Taken together,
the ablations show complementary quality contributions
from textual anchors, fine-grained latent evidence, and
query-conditioned structural fusion, while evidence reuse
improves efficiency.

We examine the sensitivity of FuseRAG to the explicit evidence
token budget $\tau$ and the number of graph message-passing
rounds $L_g$ on Llama-3-8B and Qwen3.5-9B across three datasets.

Figure~\ref{fig:token-budget} varies $\tau$ from 0 to 256.
Increasing the budget generally improves both accuracy and
F1, indicating that additional explicit facts complement
latent evidence. With $\tau=0$, the models rely on the
latent evidence branch without textual anchors.
Qwen3.5-9B is particularly sensitive to this setting on
MuSiQue, whereas Llama-3-8B retains higher performance.
Adding textual anchors improves both models across the
datasets. The gains generally diminish between 128 and
256 tokens, suggesting that a moderate budget captures
much of the benefit of explicit context. Smaller budgets
provide a more compact prompt, although the corresponding
efficiency trade-off requires latency measurements.

Figure~\ref{fig:graph-rounds} varies $L_g$ from 0 to 3.
For both models, accuracy and F1 improve as the number
of rounds increases from 0 to 2 across all three datasets.
Additional rounds allow node representations to incorporate
information propagated over longer graph paths.
Performance peaks at two rounds and decreases slightly
at three, suggesting diminishing returns from further
propagation. Although excessive mixing of node information
is a possible explanation, these results alone do not
establish oversmoothing as the cause. Overall, two
message-passing rounds provide the best answer quality
among the evaluated settings.
\section{Related Work}
\subsection{Graph-based RAG}
Graph-based RAG organizes external corpora through entities, relationships, and associations, thereby capturing dependencies that go beyond isolated text blocks \cite{han2026rag, pan2024unifying}.
A majority of existing work focuses on leveraging hierarchical or relational graph structures to enhance the indexing and retrieval processes \cite{pmlr-v267-gutierrez25a, wu2026memgraphrag, pmlr-v306-cui26i}. 
For example, RAPTOR \cite{sarthi2024raptor} recursively embeds, clusters, and summarizes text blocks to construct a hierarchical retrieval tree.
GraphRAG \cite{edge2024local} and LightRAG \cite{guo2025lightrag} construct graphs and community summaries to support local, global, and hybrid retrieval modes for document collections.
To balance indexing overhead and improve multi-hop retrieval, recent methods such as HippoRAG \cite{gutierrez2024hipporag}, KET-RAG \cite{huang2025ket}, and Clue-RAG \cite{su2026clue} explore more efficient graph retrieval and indexing strategies. 
These methods adopt personalized PageRank, knowledge graph skeletons combined with text-keyword bipartite graphs, and multi-partite graph indexes connecting text chunks, knowledge units, and entities \cite{sheng2026qa}.
Another perspective focuses on performing explicit traversal or multi-step reasoning directly on graph structures during the online reasoning phase \cite{chen2026you}. 
Typical methods include KGP \cite{wang2024knowledge} using LLM agents for graph navigation,
ToG \cite{sun2024think} iterative reasoning on knowledge graphs and
GFM-RAG \cite{luo2026gfm}, which utilizes pre-trained graph neural networks to model query-knowledge relationships.

Despite their strong retrieval capabilities, most graph-based RAG methods still serialize graph evidence \cite{zhuang2026linearrag}, leading to structural degradation and positional biases. G-Retriever \cite{he2024g} mitigates this issue through graph-based soft prompts but does not explicitly preserve reusable token-level evidence. In contrast, FuseRAG separates reusable evidence content from query-dependent graph interactions and integrates both into decoder attention alongside compact textual anchors.

\subsection{Latent Evidence Integration}
Encoding external context into continuous latent representations offers a highly promising technical approach as an alternative to processing full-length text. Early research, such as Prefix-Tuning \cite{li2021prefix}, AutoCompressors \cite{chevalier2023adapting}, and ICAE \cite{ICLR2024_0b276510}, used autoencoder or language modeling objectives to compress long contexts into compact summary vectors or a fixed number of memory slots, which served as soft prompts to reduce the computational overhead of self-attention \cite{ke2026deformba}. 
In RAG scenarios, xRAG \cite{cheng2024xrag} uses a learned modality bridge to directly project pre-trained offline document embeddings into the representation space of large language models, thereby enabling the reuse of offline document embeddings and reducing the consumption of text tokens.
Relatively few studies have explored latent representations for graph-based RAG. Existing approaches often compress evidence into a few vectors, potentially losing fine-grained semantics, while overlooking graph relationships. In contrast, FuseRAG integrates evidence semantics and graph structure in latent space, supporting multi-hop reasoning without full textual serialization.

\subsection{Efficient RAG Inference}
Accelerating the prefill phase of large language models in RAG scenarios has become a critical research focus in the systems community, with mainstream approaches centering on the cross-query reuse of KV caches~\cite{liu2025kv, 10.1145/3769778, 10.14778/3819518.3819565}. 
Systems such as RAGCache~\cite{jin2025ragcache} organize intermediate knowledge states in a knowledge tree and manage cache reuse for shared prefixes across GPU and host memory. 
Since standard KV-cache reuse requires strict exact token-prefix matching, recent research has explored reuse mechanisms that overcome these prefix constraints~\cite{yang2026kvlink, fromprefix}. 
For example, CacheBlend~\cite{yao2025cacheblend} combines precomputed chunk KV caches with selective token recomputation to restore cross-chunk attention interactions between the current block and its preceding context, while Cache-Craft~\cite{agarwal2025cache} further evaluates chunk-cache reusability to jointly schedule selective recomputation, storage, and eviction strategies. 
Furthermore, recent efforts such as MatKV~\cite{shin2026matkv} and SubGCache~\cite{zhu2026subgcache} extend caching efficiency by leveraging matrix-level KV transformations for modular state alignment or exploiting overlapping subgraph structures to facilitate reuse in graph-based retrieval.

Existing caching methods primarily reuse or adapt decoder KV states derived from serialized text. However, varying graph structures and serialization orders complicate such reuse in graph-based RAG. In contrast, FuseRAG caches query-independent evidence representations and incorporates graph interactions in latent space, reducing redundant computation and prefill overhead without requiring identical token prefixes.

\section{Conclusion}
We presented FuseRAG, an efficient graph-based RAG framework that integrates retrieved evidence and structural interactions into decoder attention. Its dual-path design combines compact textual anchors with latent evidence, preserving independently encoded passage states while augmenting them with question-dependent graph representations. By reusing evidence states across queries and memory K/V across decoding steps, FuseRAG reduces repeated encoding and long-text prefill without requiring the complete retrieved graph to be verbalized. Experiments across three multi-hop QA datasets and two LLM backbones demonstrate favorable trade-offs between answer quality and inference efficiency. These results support latent evidence integration as a practical direction for efficient graph-based RAG.
\newpage
\bibliography{ref}{}
\bibliographystyle{IEEEtran}

\end{document}